\title{Parameterization of tensor network contraction}
\author{
Bryan O'Gorman
}{
Berkeley Quantum Information \& Computation Center, University of California, Berkeley, CA, USA
Quantum Artificial Intelligence Laboratory, NASA Ames, Moffett Field, CA, USA
\and
}{
bogorman@berkeley.edu}{
https://orcid.org/0000-0001-5164-8083
}{}
\authorrunning{B. O'Gorman}
\keywords{
Tensor networks,
Parameterized complexity,
Tree embedding,
Congestion
}
\newcommand\contractioncomplexity*{\mathrm{cc}}
\newcommand{\degree}{\mathrm{deg}}
\newcommand{\width}{\mathrm{width}}
\newcommand\width*{\mathrm{w}}
\newcommand{\treewidth}{\mathrm{treewidth}}
\newcommand\treewidth*{\mathrm{tw}}
\newcommand{\pathwidth}{\mathrm{pathwidth}}
\newcommand\pathwidth*{\mathrm{pw}}
\newcommand{\branchwidth}{\mathrm{branchwidth}}
\newcommand\branchwidth*{\mathrm{bw}}
\newcommand{\cutwidth}{\mathrm{cutwidth}}
\newcommand\cutwidth*{\mathrm{cw}}
\newcommand{\modifiedcutwidth}{\mathrm{modcutwidth}}
\newcommand\modifiedcutwidth*{\mathrm{mcw}}
\newcommand\bramblenumber*{\mathrm{bn}}
\newcommand\eliminationwidth{\mathrm{elimwidth}}
\newcommand\eliminationwidth*{\mathrm{ew}}
\newcommand{\congestion}{\mathrm{con}}
\newcommand{\vertexcongestion}{\mathrm{vertcon}}
\newcommand\vertexcongestion*{\mathrm{vc}}
\newcommand{\edgecongestion}{\mathrm{edgecon}}
\newcommand\edgecongestion*{\mathrm{ec}}
\begin{document}

\maketitle

\begin{abstract}
We present a conceptually clear and algorithmically useful framework for parameterizing the costs of tensor network contraction.
Our framework is completely general, applying to tensor networks with arbitrary bond dimensions, open legs, and hyperedges.
The fundamental objects of our framework are rooted and unrooted contraction trees, which represent classes of contraction orders.
Properties of a contraction tree correspond directly and precisely to the time and space costs of tensor network contraction.
The properties of rooted contraction trees give the costs of parallelized contraction algorithms.
We show how contraction trees relate to existing tree-like objects in the graph theory literature, bringing to bear a wide range of graph algorithms and tools to tensor network contraction.
Independent of tensor networks, we show that the edge congestion of a graph is almost equal to the branchwidth of its line graph.
\end{abstract}

\section{Introduction}

Tensor networks are widely used in chemistry and physics.
Their graphical structure provides an effective way for expressing and reasoning about quantum states and circuits.
As a model for quantum states, they have been very successful in expressing ansatzes in variational algorithms (e.g., PEPS, MPS, and MERA).
As a model for quantum circuits, they have been used in state-of-the-art simulations~\cite{villalonga2018flexible,dumitrescu2018benchmarking,dumitrescu2017tree,fried2018qtorch,pednault2017breaking}.
In the other direction, quantum circuits can also simulate tensor networks, in the sense that (additively approximate) tensor network contraction is complete for quantum computation~\cite{arad2010quantum}.

The essential computation in the application of tensor networks is tensor network contraction, i.e., computing the single tensor represented by a tensor network.
Tensor network contraction is $\#\textsc{P}$-hard in general~\cite{biamonte2015tensor} but fixed-parameter tractable.
Markov and Shi~\cite{markov2008simulating} defined the contraction complexity of a tensor network and showed that contraction can be done in time that scales exponentially only in the treewidth of the line graph of the tensor network.
Given a tree decomposition of the line graph of a tensor network, a contraction order can be found such that the contraction takes time exponential in the width of the decomposition, and vice versa.
However, the translation between contraction orders and tree decompositions does not account for polynomial prefactors.
This is acceptable in theory, where running times of $O(n2^n)$ and of $O(2^n)$ are both ``exponential'';
in practice, the difference between $\Theta(n2^n)$ and $\Theta(2^n)$ can be the difference between feasible and infeasible.

We give an alternative characterization of known results in terms of tree embeddings of the tensor network rather than tree decompositions of the line graph thereof.
In this context, we call such tree embeddings \emph{contraction trees}.
While one can efficiently interconvert between a contraction tree of a tensor network and a tree decomposition of the line graph, contraction trees exactly model the matrix multiplications done by a contraction algorithm in an abstract way.
That is, the time complexity of contraction is exactly and directly expressed as a property of contraction trees, in contrast to tree decompositions of line graphs, which only capture the exponent.
Our approach is thus more intuitive and precise,
and easily applies to tensor networks with arbitrary bond dimensions and open legs.

We show that contraction trees also capture the space needed by a matrix-multiplication-based contraction algorithm.
In practice, space often competes with time as the limiting constraint.
Even further, we can express the time used by parallel algorithms as a property of \emph{rooted contraction trees}, which are to contraction orders as partial orders are to total orders.

In a contraction tree, tensors are assigned to the leaves and each wire is ``routed'' through the tree from one leaf to another.
The congestion of a vertex of the contraction tree is the number of such routings that pass through it, and similarly for the congestion of an edge.
The vertex congestion of a graph $G$, denoted $\vertexcongestion*(G)$, is the minimum over contraction trees of the maximum congestion of a vertex, and similarly for the edge congestion, denoted $\edgecongestion*(G)$.
Formally, our main results are the following two theorems.

\begin{theorem}\label{thm:sequential-costs}
A tensor network $(G, \mathcal M)$ can be contracted in time 
$n 2^{\vertexcongestion*(G) + 1}$ and space 
$n 2^{\vertexcongestion*(G) + 1}$, or in time
$2^{1.5\edgecongestion*(G) + 1}$ and space 
$2^{\edgecongestion*(G) + 1}$.
More precisely, the tensor network can be contracted in time $\min_{(T, b)} \sum_{t \in T} 2^{\vertexcongestion*(t)}$, where the minimization is over contraction trees $(T, b)$.
The contraction can be done using space equal to the minimum weighted, directed modified cutwidth of a rooted contraction tree using edge weights $w(f) = 2^{\edgecongestion*(f)}$.
If the contraction is done as a series of matrix multiplications, these precise space and time bounds are tight (though not necessarily simultaneously achievable).
\end{theorem}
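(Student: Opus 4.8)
The plan is to route everything through the correspondence—set up earlier—between contraction trees and sequences of pairwise tensor contractions, analyzing time and space separately and then matching each upper bound with a lower bound valid in the matrix-multiplication model. For the time bound I would fix a contraction tree $(T,b)$ and note that each internal vertex $t$ of $T$ is executed as a single matrix multiplication that contracts the intermediate tensors of its two child subtrees. The wires routed through $t$ are exactly the indices that either survive into the output of this step or are summed over it, so the number of scalar multiply--adds is $\Theta(2^{\vertexcongestion*(t)})$ for bond dimension $2$ (and the product of the relevant bond dimensions in general). Summing over internal vertices gives the exact running time $\sum_{t\in T}2^{\vertexcongestion*(t)}$, and minimizing over $(T,b)$ yields the precise bound.

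The two coarse time bounds then follow by counting. A binary contraction tree has fewer than $n$ internal vertices, each contributing at most $2^{\vertexcongestion*(G)}$ for a tree attaining $\vertexcongestion*(G)$, which gives the stated $n\,2^{\vertexcongestion*(G)+1}$. For the edge-congestion version I would use that every wire through an internal degree-three vertex $t$ occupies at least two of its incident edges, so $2\,\vertexcongestion*(t)\le\sum_{e\ni t}\edgecongestion*(e)\le 3\,\edgecongestion*(G)$ and hence $\vertexcongestion*(t)\le \tfrac32\,\edgecongestion*(G)$ for a tree whose maximum edge congestion is $\edgecongestion*(G)$ (the inequality, rather than equality, covering branching hyperedge routings); substituting into the per-vertex cost produces the $2^{1.5\,\edgecongestion*(G)+1}$ bound.

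For the space bound, the key observation is that, as the contractions are executed in some order, the live data consists of the intermediate tensors already computed but not yet consumed; each such tensor sits on a tree edge $f$ and has exactly $2^{\edgecongestion*(f)}$ entries, because its free indices are precisely the wires crossing $f$. Rooting the tree orients its edges toward the root and forces the contraction order to be a linear extension of the induced partial order; the total live size when a vertex is processed is then exactly the total weight of the edges straddling that position, which is the defining quantity of the weighted, directed modified cutwidth under $w(f)=2^{\edgecongestion*(f)}$. Minimizing over rooted trees and admissible orders gives the stated space cost, and the coarse bounds follow by bounding the number of simultaneously live edges and the size of each intermediate tensor—the $2^{\edgecongestion*(G)+1}$ figure corresponding to a path-like ``sweep'' that holds only a running product and the next tensor.

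Tightness in the matrix-multiplication model comes from matching lower bounds: an honest multiplication at $t$ must at least enumerate the contracted indices and write its output, costing $\Omega(2^{\vertexcongestion*(t)})$, and any order must hold all of its live intermediates at once. I expect the space analysis to be the main obstacle—pinning the identification to \emph{modified} cutwidth exactly (cutting at a vertex rather than between two consecutive ones, and respecting the root-induced orientation), and handling the caveat that the time-optimal and space-optimal trees, and even the best orders within a single fixed tree, need not coincide, so that the two families of bounds are individually but not simultaneously tight.
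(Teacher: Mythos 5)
Your proposal is correct and follows essentially the same route as the paper: per-internal-vertex matrix-multiplication cost $2^{\vertexcongestion*(t)}$ summed over the tree for time, per-edge tensor sizes $2^{\edgecongestion*(f)}$ combined with a topological ordering of the rooted tree to identify space with the weighted directed modified cutwidth, and tightness from the bijection between contraction orders and (ordered) rooted contraction trees. The only difference is that you prove the inequality $\vertexcongestion*(t)\le\tfrac32\,\edgecongestion*(G)$ inline via the degree-three counting argument, where the paper simply invokes the known bound $\vertexcongestion*(G)\le\tfrac32\,\edgecongestion*(G)$ from the congestion literature.
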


\begin{theorem}\label{thm:parallel-costs}
A parallel algorithm can contract a tensor network $(G, \mathcal M)$ in time 
\\
$\min_{(T, b)} \max_{l} \sum_{t} 2^{\vertexcongestion*(t)}$,
where the minimization is over rooted contraction trees $(T, b)$, the maximization is over leaves $l$ of $T$, and the summation is over vertices of $t$ on the unique path from the leaf $l$ to the root $r$.
In other words, the time is the minimum vertex-weighted height of a rooted contraction tree, where the weight of a vertex is $w(t) = 2^{\vertexcongestion*(t)}$.
If the contraction is done as matrix multiplications in parallel, this is tight.
\end{theorem}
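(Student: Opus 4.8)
The plan is to reduce the parallel running time to a critical-path computation over the dependency structure that a root imposes on a contraction tree, reusing the per-vertex cost analysis already established for the sequential case. First I would recall from the proof of Theorem~\ref{thm:sequential-costs} that the single matrix multiplication performed at an internal vertex $t$ costs time $2^{\vertexcongestion*(t)}$, and that a leaf $l$ holding its input tensor carries weight $2^{\vertexcongestion*(l)}$ equal to that tensor's size, so that the weight $w(t) = 2^{\vertexcongestion*(t)}$ is exactly the work done at $t$. The essential new ingredient is that a choice of root $r$ turns the undirected contraction tree into a directed dependency DAG: orienting every edge toward $r$, the multiplication at $t$ consumes the tensors produced at its children and therefore cannot begin until both child computations have finished. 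Thus the rooted contraction tree $(T, b)$ \emph{is} the precedence graph of the contraction, with the vertex weights above.

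Next I would invoke the classical theory of scheduling with precedence constraints under unbounded parallelism (the critical-path method): when any task may start the instant its predecessors complete, the makespan equals the weight of the heaviest directed path in the precedence DAG. Here the sources are the leaves (the inputs) and the unique sink is the root, so the heaviest path is exactly $\max_l \sum_t 2^{\vertexcongestion*(t)}$ over paths from a leaf $l$ to $r$, i.e.\ the vertex-weighted height of $T$. A greedy schedule that launches each multiplication as soon as both of its operands are available attains this makespan, which establishes the upper bound for a fixed $(T, b)$; minimizing over rooted contraction trees then yields the claimed parallel time.

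For tightness I would argue that any parallel algorithm contracting the network purely by matrix multiplications implicitly builds a binary tree of multiplications whose leaves are the input tensors and whose root is the final output. This tree is a rooted contraction tree, and by the per-vertex analysis of Theorem~\ref{thm:sequential-costs} the work at each vertex $t$ is at least $2^{\vertexcongestion*(t)}$. Since the multiplication at $t$ genuinely depends on its children's outputs, no schedule can overlap $t$ with any of its ancestors, so the algorithm's wall-clock time is bounded below by the weight of some leaf-to-root path, hence by the vertex-weighted height of its own tree, which is at least the minimum over all rooted contraction trees. This matches the upper bound.

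The main obstacle is making the tightness direction precise: formalizing that \emph{every} matrix-multiplication-based parallel contraction corresponds to a rooted contraction tree, and that its time is genuinely lower-bounded by the critical path rather than reducible by parallelizing a single large multiplication internally or by exploiting shared structure across multiplications. I expect to handle this by fixing the computational model so that each matrix multiplication is an atomic task whose duration equals its arithmetic cost and distinct tasks may run concurrently; under this model the identification of the multiplication tree with a rooted contraction tree, and of makespan with weighted critical-path length, is exact. A secondary technical point is confirming, consistently with the leaf-inclusive summation in Theorem~\ref{thm:sequential-costs}, that the leaf weights $2^{\vertexcongestion*(l)}$ represent the input tensor sizes and are correctly included in each path sum, so that the expression in the statement carries the intended operational meaning.
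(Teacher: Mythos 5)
Your proposal is correct and takes essentially the same route as the paper: both treat the rooted contraction tree as a precedence DAG with vertex weights $2^{\vertexcongestion*(t)}$ and identify the parallel time with the weighted critical (leaf-to-root) path, the only difference being that the paper establishes the upper bound by an explicit induction on the critical path while you invoke the standard unbounded-parallelism scheduling result. Your tightness sketch is in fact slightly more explicit than the paper's, which asserts tightness without elaboration.
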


Given a tree decomposition of a line graph with width $k-1$, we can efficiently construct a contraction tree of the original graph with vertex congestion $k$.
Thus one immediate application of our framework is as a way of precisely assessing the costs of contraction implied by different tree decompositions (even of the same width) computed using existing methods.
This is especially useful in distinguishing between contraction orders that have the same time requirements but different space requirements;
prior to this work, there was no comprehensive way of quantifying the space requirements, which in practice can be the limiting factor.
Alternatively, one can start with existing algorithms for computing good branch decompositions, which can be converted into contraction trees of small edge congestion.
More broadly, identifying the right abstraction (i.e., contraction trees) and precise quantification of the space and time costs is a foundation for minimizing those costs as much as possible.

In Section~\ref{sec:background}, we go over the graph-theoretic concepts that are the foundation of this work.
In Section~\ref{sec:unification}, we present seemingly unrelated graph properties in a unified framework that may be of independent interest.
Section~\ref{sec:unification}, while strictly unnecessary for understanding the main results, helps explain the relationship between our work and prior work.
In Section~\ref{sec:cost-model}, we introduce the cost model on which our results are based.
In Section~\ref{sec:contract-orders-and-trees}, we give our main results.
In Section~\ref{sec:ext-and-gen}, we discuss extensions and generalizations of the main ideas.
In Section~\ref{sec:conclusion}, we conclude with some possible directions for future work.
In Appendix~\ref{sec:branchwidth=edgecon}, we prove that the edge congestion of a graph is almost equal to the branchwidth of its line graph.

\begin{figure}
    \centering
    \begin{subfigure}{0.45\textwidth}
        \includegraphics[width=\textwidth]{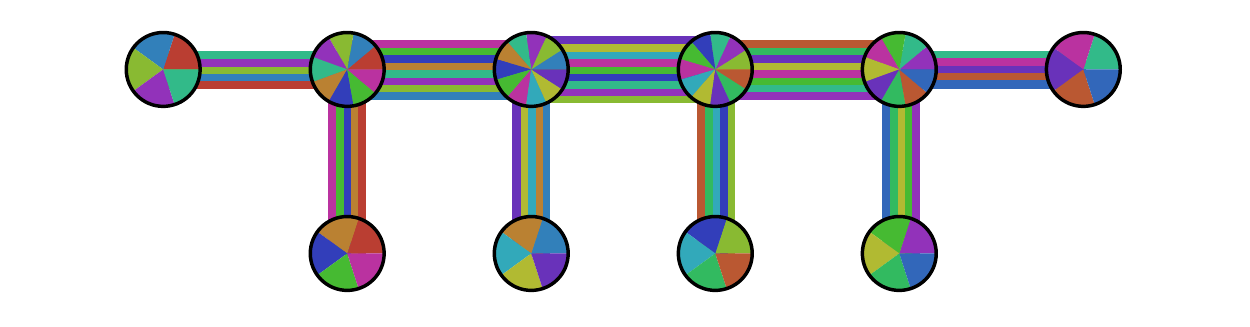}
    \end{subfigure}
    ~     \begin{subfigure}{0.45\textwidth}
        \includegraphics[width=\textwidth]{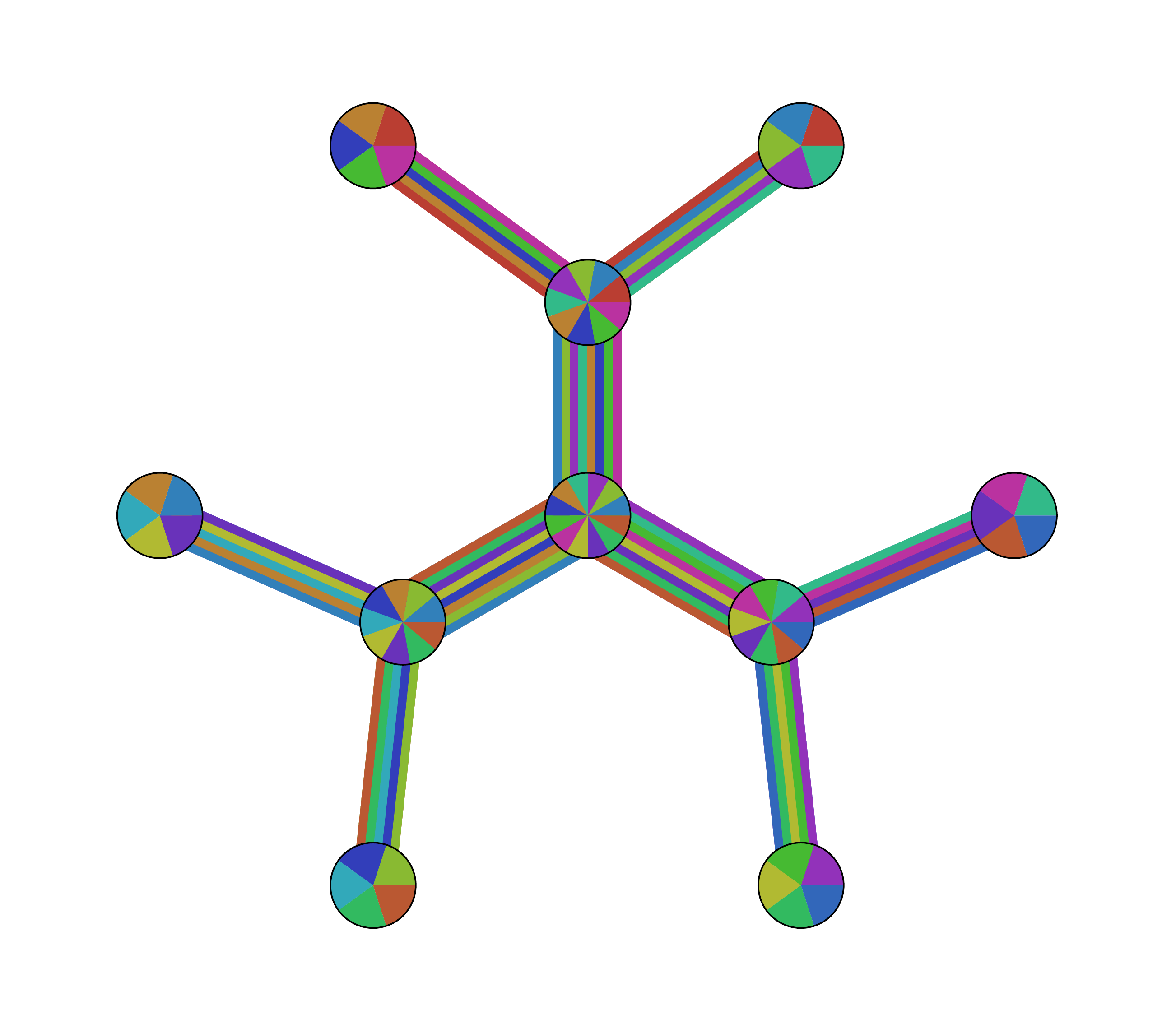}
    \end{subfigure}
\caption{Two unrooted contraction trees for a tensor network with $6$ tensors.
Each color corresponds to a wire of the tensor network.
The inclusion of a color in the representation of a vertex or edge of the contraction tree indicates the contribution of the corresponding wire's weight to the congestion of the vertex or edge, respectively.
}\label{fig:contraction trees}
\end{figure}
 \section{Background}\label{sec:background}

Let 
$[i, n] = \left\{j \in \mathbb Z \middle| i \leq j \leq n\right\}$,
$[n] = [1, n]$,
and
$[\mathbf n] = [n_1] \times [n_2] \times \cdots \times [n_r]$ for $\mathbf n = (n_1, \ldots, n_r) \in \mathbb {\left(Z^+\right)}^r$.
Let $G[S] = \left(V, E \cap \binom{S}{2}\right)$ be the subgraph of $G$ induced by a subset of the vertices $S \subset V(G)$.
For two disjoint sets of vertices of an edge-weighted graph,
$w(S, S') = \sum_{\left.\{u, v\} \in E \middle| u \in S, v \in S'\right.} w(\{u, v\})$
is the sum of the weights of the edges between $S \subset V$ and $S' \subset V$.
More generally, for $r$ disjoint sets of vertices,
$w(S_1, \ldots, S_r) = \sum_{\{i, j\} \in \binom{[r]}{2}} w(S_i, S_j)$ is the sum of the weights of the edges with endpoints in distinct sets.
In this context, we will denote singleton sets by their sole arguments, e.g.,
$w(u, v) = w (\{u\}, \{v\}) = w(\{u, v\})$.

\subsection{Tensor networks and contraction}
A \emph{tensor} can be defined in several equivalent ways. 
Most concretely, it is a multidimensional array.
Specifically, a rank-$r$ tensor is an $r$-dimensional array of size $\mathbf{d} = (d_1, \ldots, d_r)$.
More abstractly, a tensor is a collection of numbers indexed by the Cartesian product of some set of indices, e.g.,  
${\left[T_{i_1, i_2\ldots, i_r}\right]}_{(i_1, i_2, \ldots, i_r) \in [d_1] \times [d_2] \times \cdots \times [d_r]}$
indexed by $\mathbf i \in [\mathbf n]$.
Alternatively, a tensor can be thought of as a multilinear map $T: [\mathbf d] \to \mathbb C$.
(Our focus will be on complex-valued tensors.)

\begin{definition}[Tensor network]
A \emph{tensor network} $(G, \mathcal M)$ is an undirected graph $G$ with edge weights $w$ and a set of tensors $\mathcal M = \left\{\mathbf M_v \middle| v \in V(G)\right\}$ such that $\mathbf M_v$ is a $|N(v)|$-rank tensor with $2^{\degree(v)}$ entries, where $\deg(v) = \sum_{u \in N(v)} w(\{u, v\})$ is the weighted degree of $v$.
Each edge $e$ corresponds to an index $i \in \left[2^{w(e)}\right]$ along which the adjacent tensors are to be contracted.
\end{definition}

\begin{figure}
    \centering
    \begin{subfigure}{0.3\textwidth}
        \includegraphics[width=\textwidth, height=\textwidth]{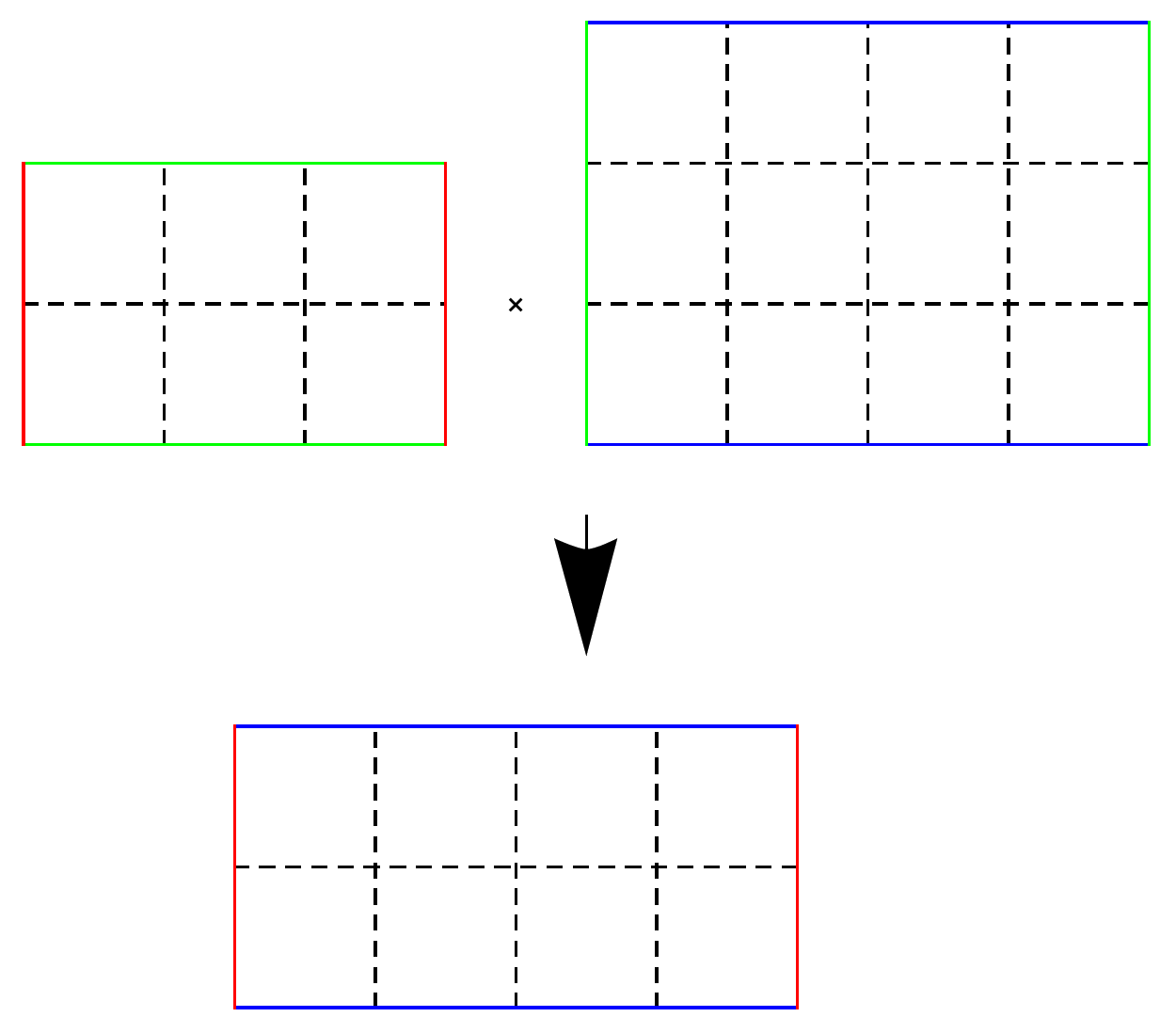}
    \end{subfigure}
    ~     \begin{subfigure}{0.3\textwidth}
        \includegraphics[width=\textwidth, height=\textwidth]{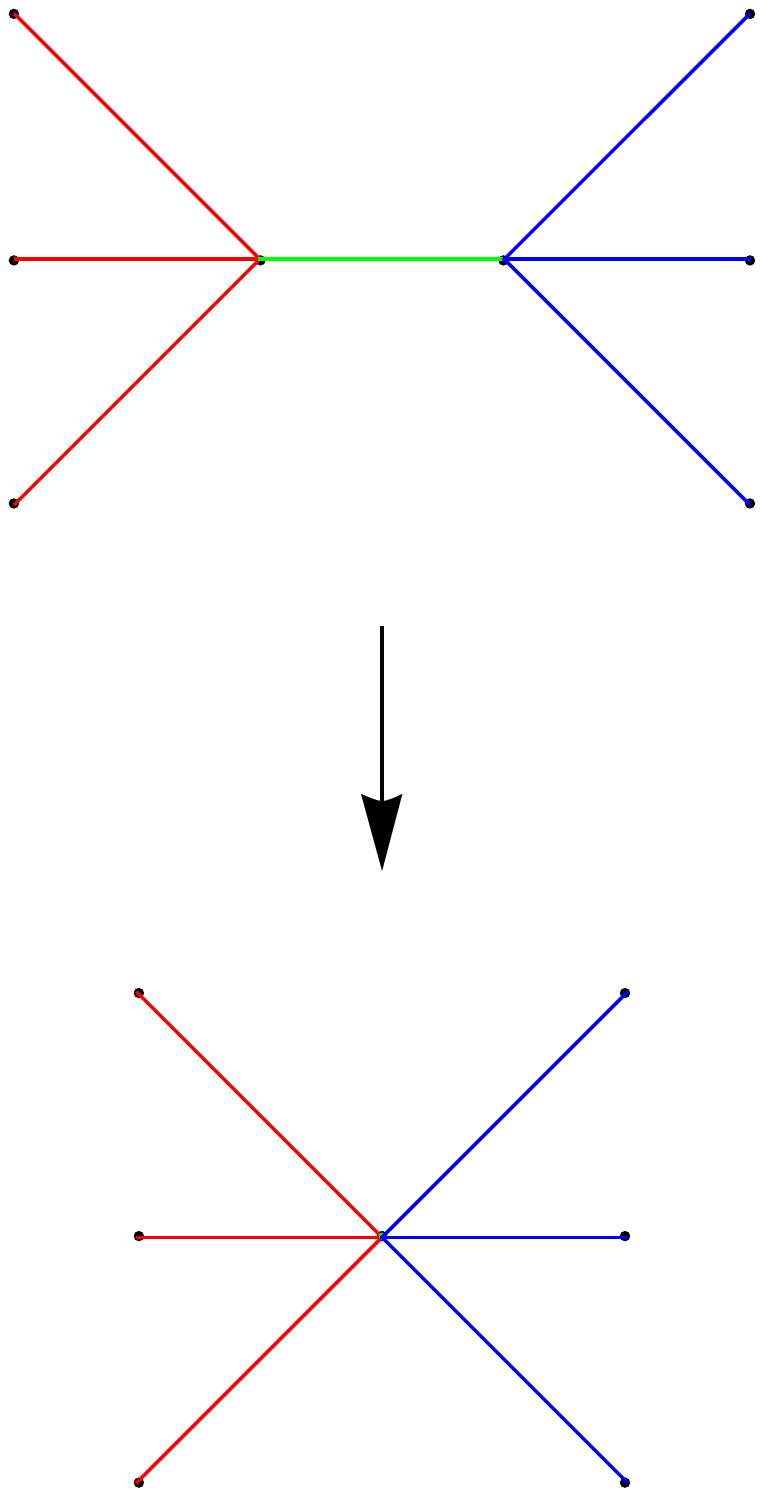}
    \end{subfigure}
    ~     \begin{subfigure}{0.3\textwidth}
        \includegraphics[width=\textwidth, height=\textwidth]{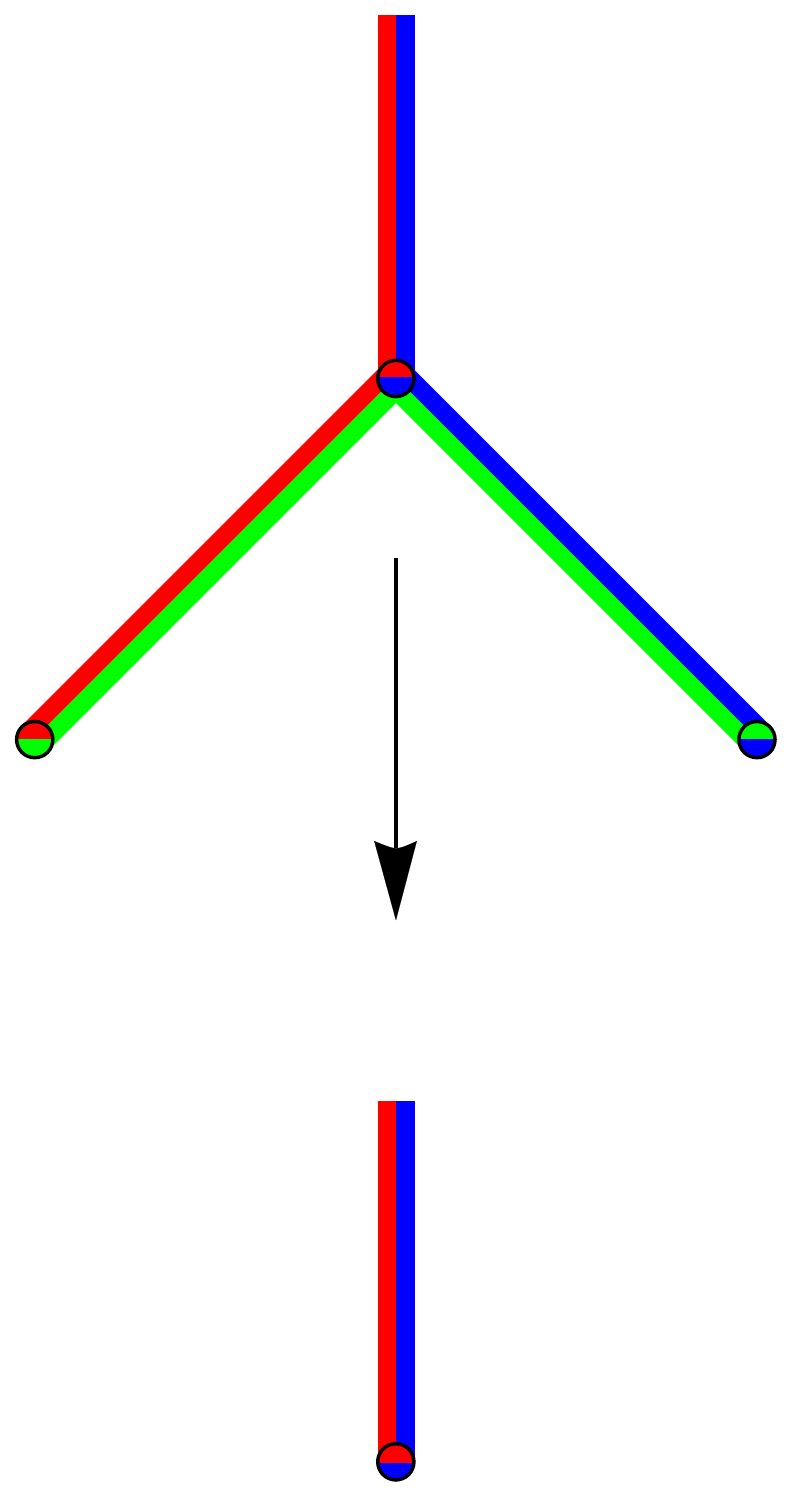}
    \end{subfigure}
\caption{Three ways of viewing the contraction of two tensors.
Left: multiplication of a $d_{\mathrm{L}} \times d_{\mathrm{M}}$-rank tensor with a $d_{\mathrm{M}} \times d_{\mathrm{R}}$-rank tensor, resulting in a $d_{\mathrm{L}}\times d_{\mathrm{R}}$-rank tensor.
Middle: contraction of a degree $w_{\mathrm{L}} + w_{\mathrm{M}}$ vertex with a degree $w_{\mathrm{M}} + w_{\mathrm{R}}$ vertex, resulting in a degree $w_{\mathrm{L}} + w_{\mathrm{R}}$ vertex.
Right: removing a close pair of leaves (with congestions $w_{\mathrm{L}} + w_{\mathrm{M}}$ and $w_{\mathrm{M}} + w_{\mathrm{R}}$) of a contraction tree, leaving a new leaf with congestion $w_{\mathrm{L}} + w_{\mathrm{R}}$.
}\label{fig:tensor-contraction}
\end{figure}

The contraction of two tensors is the summation over the values of their shared indices.
Graphically, this is like an edge contraction of the edge adjacent to the two tensors.
The result is a new tensor that takes the place of the two original ones.~\footnote{
Note that this is a ``parallel'' model of contraction, whereas Markov and Shi use ``one-edge-at-a-time'' contraction of multigraphs.
They are equivalent in the sense that an edge with integer weight can be considered as that number of (unweighted) parallel edges.
The parallel model more closely matches how contraction is done in practice.
It also allows for arbitrary bond dimension, whereas the multigraph model requires that all bond dimensions be powers of the same base.
}
See Figure~\ref{fig:tensor-contraction}.
Let $v_1$ and $v_2$ be the vertices contracted into the new vertex $v_{\{1,2\}}$.
The weight of an edge between the new vertex and any other vertex $v'$ is 
$w\left(v_{\{1, 2\}}, v'\right) = 
w\left(\{v_1, v_2\}, v'\right) = 
w\left(v_1, v'\right) + w\left(v_2, v'\right)$.

Except in Section~\ref{sec:ext-and-gen}, we assume that all tensor networks have no ``open legs'', i.e., every edge connects two vertices (tensors).
In this case, the value of a tensor network is the single number that results from contracting all of its edges.
Each contraction reduces the number of vertices (tensors) by one, so the network is fully contracted by $n-1$ contractions.
We call a sequence of such contractions a \emph{contraction order}.
The value of the tensor network is independent of the contraction order, but the cost of doing the contraction can vary widely depending on the contraction order.
Each contraction is identified by an edge, but that edge may not be in the original graph, i.e., its adjacent vertices may have been formed by earlier contractions.
One way of specifying a contraction order is by a sequence of edges of the original graph that constitute a spanning tree thereof.
In Section~\ref{sec:contract-orders-and-trees}, we introduce the notion of contraction trees, which allow for a conceptually clear way of expressing contraction orders that makes manifest the associated temporal and spatial costs. 

Exactly computing the value of a tensor network is \#P-hard~\cite{biamonte2017tensor}, as a tensor network can be constructed that counts the number of satisfying assignments to a satisfiability instance or the number of proper colorings of a graph.
Even multiplicative and additive approximation is NP-hard~\cite{arad2010quantum}.
Interestingly, approximating the value of a tensor network with bounded degree and bounded bond dimension is BQP-complete~\cite{arad2010quantum}.
That is, not only can tensor networks simulate quantum circuits, but quantum circuits can simulate tensor networks as well.
In this sense, tensor networks and quantum circuits are computationally equivalent.

\subsection{Treewidth and branchwidth}
This section is intended primarily to establish notation and recapitulate the standard definitions of the graph properties used in the present work.
For a more thorough and pedagogical treatment, see Diestel's excellent textbook~\cite{diestel2018graph}.
Many instances of graph problems that are hard in general are actually easy when instance graphs are restricted to trees.
In many such cases, this generalizes in the sense that it is possible to characterize the hardness of an instance by how ``tree-like'' it is, as captured by the treewidth of the graph.
The treewidth of a graph is defined in terms of an optimal tree decomposition.
Treewidth has several alternative characterizations; one of these, \emph{elimination width}, is the basis of Markov and Shi's result equating treewidth and contraction complexity.

\begin{definition}[Tree decomposition]
A \emph{tree decomposition} of a graph $G=(V, E)$ is a tuple $(T, \mathcal X)$ of a tree $T$ and a tuple $\mathcal X = {\left(X_t\right)}_{t \in V(T)}$ of subsets (called \emph{bags}) of the vertices of $G$ with the following properties.
\begin{enumerate}
    \item For every edge $\{u, v\} \in E(G)$, there is some bag $X \in \mathcal X$ that contains both endpoints: $u, v \in X$.
    \item For every vertex $v \in V(G)$ of $G$, the subtree $T[S_v]$ of $T$ induced by the bags 
    $S_v = \left\{X \in \mathcal X \middle| v \in X\right\}$ containing $v$ is non-empty and connected.
\end{enumerate}
\end{definition}

\begin{definition}[Width and treewidth]
The \emph{width} of a tree decomposition $(T, \mathcal X)$ of a graph $G$ is one less than the size of the largest bag: $\width(G, T, \mathcal X) = \width(\mathcal X) = \max_{X \in \mathcal X} |X| - 1$.
The \emph{treewidth} of a graph is the minimum width of a tree decomposition of the graph.
\end{definition}
\noindent
A related concept is that of path decompositions and pathwidth, defined analogously to tree decompositions and treewidth, except restricted to paths rather than trees.

\begin{definition}[Path decomposition and pathwidth]
A \emph{path decomposition} of a graph $G$ is a tree decomposition $(T, \mathcal X)$ of $G$ such that $T$ is a path.
The pathwidth $\pathwidth(G)$ of $G$ is the minimum width of a path decomposition of $G$.
\end{definition}

\begin{definition}[Branch decomposition]
A \emph{branch decomposition} of a graph $G = (V, E)$ is a tuple $(T, b)$ of a binary tree $T$ and a bijective function $b: E(G) \to V(T)$ between the edges $E$ of $G$ and the leaves of $T$.
\end{definition}

For each vertex $v \in V(G)$ of $G$, let $S_v \subset V(T)$ be the minimal spanning tree of $T$ that contains all the leaves corresponding to edges adjacent to $v$.

\begin{definition}[Branchwidth]
The \emph{width}, denoted $\width_{G}(T, b, \{s, t\})$, of an edge $\{s, t\} \in E(T)$ of a branch decomposition $(T, b)$ of a graph $G$ is $\left|\left\{ v \in V(G) \middle| \{s, t\} \subset S_v\right\}\right|$, i.e., the number of vertices of $G$ such that the subtree $T[S_v]$ contains $\{e, t\}$.
The \emph{width} of the branch decomposition is the largest width of an edge, $\width_G(T, b) = \max_{f \in E(T)} \width_G(T, b, f)$.
The \emph{branchwidth} $\branchwidth(G) = \min_{(T, b)} \width_G(T, b)$ of a graph is the minimum width of a branch decomposition thereof.
\end{definition}

\subsection{Congestion}

There is an alternative but less explored way of quantifying how ``tree-like'' a graph is: the minimum congestion of a tree embedding, introduced by Bienstock~\cite{bienstock1990embedding}.\footnote{
Note that this is entirely distinct from a different type of congestion problem in which the goal is find routings for some specified set of pairs of terminals.}

\begin{definition}[Tree embedding]
A \emph{tree embedding} of a graph $G$ is a tuple $(T, b)$ of a binary tree $T$ and a bijection $b: V(G) \to V(T)$ between the vertices of $G$ and the leaves of $T$.
\end{definition}

\noindent 
Let $S_{v, w}$ be the unique path between the leaves $b(v)$ and $b(w)$ of $T$.

\begin{definition}[Congestion]
The congestion of a vertex $v\in V(T)$ (resp., edge $f\in E(T)$) is the total weight of the edges $e \in E(G)$ whose subtrees $S_e$ include $v$ (resp., $f$).
\end{definition}

\subsection{Cutwidth}

\begin{definition}[Cutwidth]
Let $f: V \to [n]$ be a linear ordering of the vertices of a graph $G = (V, E)$. 
The cutwidth of $f$ is the maximum number of edges that cross a gap: 
\begin{equation*}
\max_{i \in [n-1]} \left|\left\{\{u, v\} \in E \middle| f(u) \leq i < f(v)\right\}\right|.
\end{equation*}
The modified cutwidth of $f$ is the maximum number of edges that cross a vertex:
\begin{equation*}
\max_{i \in [n]} \left|\left\{\{u, v\} \in E \middle| f(u) < i < f(v)\right\}\right|.
\end{equation*}
The cutwidth (resp., modified cutwidth) of a graph is the minimum cutwidth (resp., modified cutwidth) of a linear ordering.
For edge weighted graphs, the weighted cutwidth and modified cutwidth count the total weights of the relevant edge sets rather than their cardinalities.
For a directed acyclic graph, the directed cutwidth (resp., modified cutwidth) is the minimum cutwidth (resp., modified cutwidth) of a linear ordering that is topologically sorted according to the graph.
\end{definition}

\subsection{Parameterized complexity}

Approximating both treewidth and pathwidth to within a constant factor is NP-hard, though there exist efficient algorithms for logarithmic and polylogarithmic approximations, respectively~\cite{bodlaender1992approximating,bodlaender1995approximating}.
However, deciding whether or not the treewidth is at most some constant can be done in linear time (albeit it with an enormous prefactor)~\cite{bodlaender1993linear}.
For many graph problems, e.g., Maximum Independent Set, there exist algorithms whose run time is exponential only in the treewidth or pathwidth, i.e., given the instance graph and a tree decomposition thereof of width $k$, the algorithm runs in time $2^k n^{O(1)}$~\cite{arnborg1989linear}.
The Exponential Time Hypothesis (ETH) implies that several such parameterized complexity results are optimal, in the sense that there exists no $2^{o(k)} n^{O(1)}$ algorithm~\cite{cygan2015lower}.

The situation is similar for branchwidth.
Computing the branchwidth of a graph is in general NP-hard, but can be done efficiently for planar graphs~\cite{seymour1994call}.
(Whether computing the treewidth of a planar graph is NP-hard is an open question.)
As is the case for treewidth, there is a constructive linear time algorithm for deciding whether or not the branchwidth is at most some constant (and in this case with better constant factors)~\cite{bodlaender1997constructive}.
Good branch decompositions can be used to implement dynamic programming algorithms for problems such as the traveling salesman problem~\cite{cook2003tour}.

Computing the vertex congestion of a graph is claimed to be NP-hard~\cite{bienstock1990embedding}, but no proof appears in the literature.

Computing the (edge) cutwidth is NP-hard, but for any constant $k$, a linear ordering of cutwidth $k$ (for all variants) can be found in linear time if one exists~\cite{bodlaender2009derivation}.
 \section{Unified framework of graph properties}\label{sec:unification}

\begin{figure}
\centering
\begin{tabular}{c|c|c||c|c} &&& \multicolumn{2}{c}{Target family} \\
Leaves & Subtrees & Minimization over & Trees & Caterpillars \\ \bottomrule
Edges & Vertices & Vertices & Treewidth & Pathwidth \\
Edges & Vertices & Edges & Branchwidth & \\
Vertices & Edges & Vertices & Vertex congestion & Modified cutwidth \\
Vertices & Edges & Edges & Edge congestion &
\end{tabular}
\caption{
Table of graph properties.
Each row corresponds to an instantiation of Equation~\ref{eq:unified-definition}.
}
\label{fig:graph-properties}
\end{figure}

In this section, we present a unified framework of various graph properties, as captured in the following combined definition:

\begin{equation}\label{eq:unified-definition}
\parbox{\textwidth}{
A 
$\left\{\substack{
    \text{tree decomposition} \\ 
    \text{branch decomposition} \\ 
    \text{tree embedding} \\
    \text{tree embedding}
}\right\}$
of a 
$\left\{\substack{
    \text{vertex} \\
    \text{vertex} \\ 
    \text{edge} \\ 
    \text{edge}
}\right\}$-weighted
graph $G$ is a tuple $(T, b)$ of a binary tree $T$ and a bijection $b$ between the leaves of $T$ and the  $\left\{\substack{
    \text{edges} \\ 
    \text{edges} \\ 
    \text{vertices} \\
    \text{vertices}
}\right\}$
of $G$.
The bijection $b$ implies a subtree for every 
$\left\{\substack{
    \text{vertices} \\
    \text{vertices} \\ 
    \text{edges} \\ 
    \text{edges}
}\right\}$
of the graph.
The 
$\left\{\substack{
    \text{treewidth} \\
    \text{branchwidth} \\
    \text{vertex congestion} \\
    \text{edge congestion} \\
}\right\}$
of the graph is the minimum over all
$\left\{\substack{
    \text{tree decompositions} \\ 
    \text{branch decompositions} \\ 
    \text{tree embeddings} \\
    \text{tree embeddings}
}\right\}$
of the maximum total weight of all subtrees containing any 
$\left\{\substack{
    \text{vertex} \\
    \text{edge} \\ 
    \text{vertex} \\ 
    \text{edge}
}\right\}$.
The 
$\left\{\substack{
    \text{pathwidth}\\
    \text{---}\\
    \text{modified cutwidth}\\
    \text{---}
}\right\}$
is defined in the same way as 
$\left\{\substack{
    \text{treewidth} \\
    \text{---} \\
    \text{vertex congestion} \\
    \text{---} \\
}\right\}$
except that $T$ is restricted to be a caterpillar.
}
\end{equation}

Let's unpack this.
For branchwidth and congestions,~\eqref{eq:unified-definition} is the standard definition.
For the others,~\eqref{eq:unified-definition} is non-standard but equivalent to the standard definitions.
Writing them all in this way helps elucidate the relationships between them, which are obscured by the standard definitions.

Note that both the vertex and edge congestions of a graph $G$ are defined as optimal properties of the same type of object, namely a tree embedding $(T, b)$.
For every edge $e \in E(G)$, the mapping $b: V(G) \to V(T)$ of the vertices to leaves of the tree implies a minimal subtree $S_e$ connecting the leaves of $T$ corresponding to its endpoints in $G$. 
(For an edge of size $2$, this subtree $S_e$ is a path, but the definition allows for hyperedges as well.)
The vertex and edge congestions are then the maximum total weight of subtrees that contain any vertex or edge, respectively, of the tree $T$.

There is a similar relationship between treewidth and branchwidth.
Usually, we think of a tree decomposition of a graph $G = (V, E)$ as a tree $T$ and a subtree $S_v$ for every vertex in $V(G)$ such that the subtrees for every pair of adjacent (in $G$) vertices overlap.
In~\eqref{eq:unified-definition}, $S_v$ is specified implicitly as the (unique) minimal spanning subtree of $T$ that connects the leaves of $T$ corresponding to the edges of $G$ that are incident to $v$.
By design, this tree $T$ and set of subtrees is the same as that for a branch decomposition.
The treewidth and branchwidth are the maximum total weight of subtrees (now corresponding to vertices of $G$) that contain any vertex or edge, respectively, of the tree $T$.

So we see that the congestions are defined by the overlap of subtrees of $T$ corresponding to edges of $G$ and that the tree- and branchwidths are defined by the overlap of subtrees of $T$ corresponding to vertices of $G$, the former implied by a mapping from vertices of $G$ to leaves of $T$ and the latter by a mapping from edges of $G$ to leaves of $T$.
The vertex congestion and treewidth are concerned with the overlap at vertices of $T$, and the edge congestion and branchwidth with the overlap on edges of $T$.
Thus we have made the analogy that
treewidth : branchwidth :: (vertex congestion) : (edge congestion). For example, that~\cite{robertson1991obstructions}
$\branchwidth*(G) \leq \treewidth*(G) \leq \frac32 \branchwidth*(G)$
and~\cite{bienstock1990embedding}
$
\edgecongestion*(G) \leq \vertexcongestion*(G) \leq \frac32 \edgecongestion*(G)$
is no coincidence.

Now consider the line graph 
$L(G) = \left(E, \left\{\{e, e'\} \subset E \middle| e \cap e' \neq \emptyset\right\}\right)$ of a graph $G = (V, E)$.
Suppose we have a tree embedding $(T, b)$ of the original graph $G$, with an implied subtree $T_e$ for every edge $e \in E(G)$.
Because the vertices of the line graph $L(G)$ correspond to the edges of $G$, this can be considered as a branch decomposition of the line graph $L(G)$.
For every pair of edges $e, e' \in E(G) = V(L(G))$ that are adjacent in the line graph, the corresponding subtrees $S_e, S_{e'}$ intersect at the leaf $b(v) \in V(T)$, where $v \in e, e'$ is the vertex of $G$ adjacent to $e$ and $e'$.
The vertex congestion of the tree embedding $(T, b)$ is the width of $(T, b)$ interpreted as a tree decomposition, and 
the edge congestion of the tree embedding is the width of $(T, b)$ interpreted as a branch decomposition.
This implies that $\treewidth*(L(G)) \leq \vertexcongestion*(G)$ and 
$\branchwidth*(L(G)) \leq \edgecongestion*(G)$.
Actually, these inequalities are tight or almost so: 
$\treewidth*(L(G)) = \vertexcongestion*(G)$ and 
$\branchwidth*(L(G)) \leq \edgecongestion*(G) \leq 
\branchwidth*(L(G)) + \left\lfloor \frac{\degree(G)}{3} \right\rfloor$.
The other direction, going from a tree decomposition to a tree embedding, requires seeing that a tree decomposition of a line graph can be made to have a particular structure, specifically that the edges of $L(G)$ corresponding to each vertex of $G$ can be mapped to disjoint subtrees of $T$.
The equality was shown by Harvey and Wood~\cite{harvey2018treewidth} and captures how our characterization of the temporal costs of tensor network contraction relates to earlier characterizations.
However, our characterization in terms of tree embeddings, while mathematically equivalent to that in terms of tree decompositions of line graphs, allows for a conceptually cleaner and more fine-grained perspective.  
We prove the inequalities in Appendix~\ref{sec:branchwidth=edgecon}.
\begin{theorem}
\label{thm:branchwidth-equals-edge-congestion}
The edge congestion of graph $G$ is at least the branchwidth of its line graph and at most the same plus a third of its maximum degree.
Furthermore, a tree embedding with edge congestion $k + \left\lfloor\degree(G) / 3\right\rfloor$ can be efficiently computed from a branch decomposition of width $k$ and a branch decomposition of width $k$ can be efficiently computed from a tree embedding with edge congestion $k$.
\end{theorem}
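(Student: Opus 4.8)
The plan is to prove the two inequalities separately, in each case by exhibiting the explicit, efficient conversion promised in the ``furthermore'' clause, so that the constructive claims and the numerical bounds fall out together. The conceptual crux in both directions is to reconcile the mismatch between the leaf sets of the two objects while keeping the system of subtrees $\{S_e\}_{e \in E(G)}$ essentially fixed: a tree embedding of $G$ has leaves indexed by $V(G)$, whereas a branch decomposition of $L(G)$ has leaves indexed by $E(L(G))$, the pairs of incident edges of $G$, even though in both objects the subtrees that we load are indexed by $E(G) = V(L(G))$. Throughout I will use that, in a simple graph, two distinct edges share at most one vertex, so that $E(L(G)) = \bigsqcup_{v} \binom{\{e \,:\, v \in e\}}{2}$ and the edges of $L(G)$ incident at $v$ form a clique $K_{\degree(v)}$ that is an induced subgraph of $L(G)$.

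For the easy direction, $\branchwidth*(L(G)) \le \edgecongestion*(G)$, I would start from a tree embedding $(T, b)$ of $G$ of edge congestion $k$ and \emph{expand} each leaf $b(v)$ into a gadget: a small binary subtree whose leaves are the $\binom{\degree(v)}{2}$ pairs at $v$ (an arbitrary branch decomposition of $K_{\degree(v)}$), the whole gadget hanging off the pendant edge formerly at $b(v)$. This yields a branch decomposition $(T', b')$ of $L(G)$. For a backbone edge $f$ inherited from $T$, the subtree $S'_e$ of vertex $e=\{u,w\}$ of $L(G)$ contains $f$ iff the path between the gadgets at $u$ and $w$ does, i.e.\ iff $f \in S_e$, so its width equals the congestion of $f$, at most $k$. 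For an edge inside the gadget at $v$, only the $\degree(v)$ subtrees of edges incident to $v$ can pass through it, so its width is at most $\degree(v)$; and since the pendant edge at $b(v)$ already had congestion $\degree(v)$ in $(T,b)$ we have $\degree(v) \le k$. Hence every edge of $T'$ has width at most $k$, giving the inequality with an exact and efficient construction.

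For the hard direction, $\edgecongestion*(G) \le \branchwidth*(L(G)) + \lfloor \degree(G)/3\rfloor$, I would start from a branch decomposition $(T', b')$ of $L(G)$ of width $k$ with its subtrees $S'_e$. For each vertex $v$ of $G$, the subtrees $\{S'_e : v \in e\}$ pairwise intersect (any two incident edges share the leaf labelled by their pair), so by the Helly property for subtrees of a tree they have a common region $c_v \subseteq T'$, and since $S'_e \supseteq c_u, c_w$ for $e=\{u,w\}$, the path from $c_u$ to $c_w$ lies inside $S'_e$. This already realizes a ``routing'' of $G$ into $T'$ whose load on any edge is at most its width $k$; the remaining task is to turn each meeting region $c_v$ into a distinct pendant leaf $b(v)$. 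Pulling $v$ out to a new pendant leaf forces all $\degree(v)$ incident subtrees down the new pendant edge, loading it with exactly $\degree(v)$. The key numerical input is that $K_{\degree(v)}$ is an induced subgraph of $L(G)$ and branchwidth is subgraph-monotone, so $k \ge \branchwidth*\!\big(K_{\degree(v)}\big) = \lceil 2\degree(v)/3\rceil$; combined with the identity $\degree(v) = \lceil 2\degree(v)/3\rceil + \lfloor \degree(v)/3\rfloor$, this gives $\degree(v) \le k + \lfloor \degree(v)/3\rfloor \le k + \lfloor \degree(G)/3\rfloor$. Because congestion is a maximum over edges rather than a sum, the overhead is incurred only on the pendant edges and never accumulates, yielding the claimed additive bound.

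The main obstacle is the global correctness of this last pull-out step: the regions $c_v$ for different vertices overlap and interleave within $T'$, so the individual pull-outs must be carried out simultaneously (contracting the clique leaf-groups and rerouting) while guaranteeing that no backbone edge ends up carrying more than its original width $k$ and that the result is a genuine binary tree embedding with leaves exactly $V(G)$. I expect this to require a normalization lemma — essentially the statement flagged in Section~\ref{sec:unification} that a (branch) decomposition of $L(G)$ can be arranged so the edges of $L(G)$ at each vertex of $G$ map to separable subtrees — together with the exact branchwidth of the complete graph to pin down the constant; establishing $\branchwidth*(K_d) = \lceil 2d/3\rceil$ (or at least the lower bound $\ge \lceil 2d/3\rceil$ needed here) and verifying that the contraction preserves width are the steps I would write out in full, the rest being routine bookkeeping.
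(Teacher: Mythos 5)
Your first direction (tree embedding of $G$ $\to$ branch decomposition of $L(G)$) is correct and is essentially the paper's construction: both replace each leaf $b(v)$ by a gadget whose leaves are the $\binom{\degree(v)}{2}$ incident pairs at $v$, and both observe that only the $\degree(v)$ subtrees of edges at $v$ enter the gadget, with $\degree(v)$ equal to the congestion of the old pendant edge and hence at most $k$.

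The gap is in the second direction, and it sits exactly where you flagged ``the main obstacle.'' After locating the Helly vertex $t_0$ common to all $S'_e$ with $v \in e$, you cannot attach the new pendant leaf \emph{at} $t_0$ (that would give $t_0$ degree four), so you must subdivide one of the three edges of $T'$ incident to $t_0$, say $\{t_0, t_1\}$, and hang the leaf off the subdivision vertex $t'$. Every routing $S'_e$ with $v \in e$ that contains $t_0$ but \emph{not} $t_1$ must then be extended across the half-edge $\{t_0, t'\}$ to reach the new leaf, so that backbone edge acquires congestion up to $k$ plus the total weight of those routings: your asserted invariant that no backbone edge ends up carrying more than its original width $k$ is false in general, and the overhead is \emph{not} confined to the pendant edges. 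The missing idea is a pigeonhole over the three neighbors of $t_0$: partitioning the weight $\degree(v)$ of the routings at $v$ according to which neighbor $t_i$ their subtrees avoid, the lightest class has weight at most $\degree(v)/3$, and subdividing toward that neighbor caps the backbone overhead at $\lfloor\degree(G)/3\rfloor$. This is the paper's argument; it replaces your appeal to $\branchwidth*(K_{\degree(v)}) = \lceil 2\degree(v)/3\rceil$, which correctly handles the pendant edge (and makes explicit something the paper leaves implicit) but says nothing about the subdivided backbone edge. Without this step the conversion as you describe it does not achieve the claimed bound.
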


For vertex congestion and treewidth (which concern the overlap of subtrees at vertices), the requirement that the mapping be a bijection with the leaves of the tree can be dropped, as can the requirement that the tree be binary.
Yet these requirements are without loss of generality, as any tree embedding or tree decomposition can be modified to satisfy these without increasing its vertex congestion or treewidth, respectively.
For edge congestion and branchwidth, which concern overlap over edges, the bijection and degree requirements are essential.

The usual definitions for pathwidth and modified cutwidth are in terms of paths (or, equivalently, linear orderings), whereas in~\eqref{eq:unified-definition} we allowed them to be caterpillars.
This is equivalent, and allows us to relate the properties just discussed with their linear variants.
In particular, the relationship between the bubblewidth of a tensor network $(G, \mathcal M)$ and its ``contraction complexity'' is almost the same as that between the modified cutwidth of the graph and its vertex congestion, in the sense that the bubblewidth is exactly equal to the cutwidth and $\cutwidth*(G) \leq \modifiedcutwidth*(G) \leq \cutwidth*(G) + \degree(G)$. 

We can make another analogy, that
treewidth : (vertex congestion) :: pathwidth : (modified cutwidth) .
For example,~\cite{bodlaender2009derivation,harvey2018treewidth}
$\frac12 \left(\treewidth*(G) + 1\right)
\leq \vertexcongestion*(G) + 1 \leq \degree(G)
\left(\treewidth*(G) + 1\right)
$
and
$
{\pathwidth*(G) \leq \modifiedcutwidth*(G) + 1 \leq \degree(G)
\left(\pathwidth*(G) + 1\right)}
$.

The (unmodified) cutwidth is a linear analog to what Ostrovskii called the ``tree congestion'' of a graph~\cite{ostrovskii2004minimal}; the tree congestion is the same as the edge congestion except that there is a bijection between \emph{all} the vertices of the binary tree, rather than just the leaves.
 \section{Contraction costs}\label{sec:cost-model}

Our primary motivation is minimizing the time and space costs of tensor network contraction.
Ideally, for instances of interest we would like to \emph{provably} minimize the cost, which entails tight lower bounds and the corresponding constructions that meet them.
Given the formal hardness of tensor network contraction and the informal hardness of proving lower bounds, we restrict our attention to minimizing the cost of tensor network contraction as it is most commonly done: as a series of matrix multiplications.

First, how much space is required to store a tensor network $(G, \mathcal M)$?
Each tensor $\mathbf M_v$ consists of $2^{\degree_v}$ numbers; this is the main component of the space requirements.
Technically, we must also keep track of the graph $G$ and the weights of its edges $E(G)$ as well as a dope vector for each tensor indicating how the tensor is laid out in memory; these will be negligible. 
Our memory accounting will be in units of whatever is used to store a single entry of a tensor.
While in general, the bit depth of an entry may scale non-trivially with instance size, practical implementations will use a fixed-width data type. 

Then, what do we need to do a contraction of two tensors?
Suppose we want to contract a $(d_{\mathrm{L}}, d_{\mathrm{M}})$ tensor with a $(d_{\mathrm{M}}, d_{\mathrm{R}})$ tensor along their shared dimension $d_{\mathrm{M}}$.
The input tensors require a total of $d_{\mathrm{M}} (d_{\mathrm{L}} + d_{\mathrm{R}})$ space and the output tensor $d_{\mathrm{L}} d_{\mathrm{R}}$. 
In theory, it should be possible to do the contraction using no more space than that required by the larger of the input tensors and output tensor.
In practice, new memory is allocated for the new tensor, it is populated with the appropriate data from the input tensors, and then the memory for the latter is freed. 
We assume the second cost model, in which memory is simultaneously allocated both for the tensors to be contracted and for the tensor that results from their contraction, but our ideas are straightforwardly modifiable for plausible variants.

The contraction itself is essentially matrix multiplication, and a straightforward implementation will take time $d_{\mathrm{L}} \cdot d_{\mathrm{M}} \cdot d_{\mathrm{R}}$.
There exist Strassen-like algorithms for matrix multiplication with better asymptotic runtime, but the constant pre-factors are so large and the straightforward algorithm so heavily optimized that they are of little practical value given the size of currently available machines.

Lastly, in order to implement a tensor contraction as a matrix multiplication, the tensors must be laid out commensurately in memory.  
If they are not, then the data of one tensor or both must be permuted to make them so. 
This permutation can effectively be done in place and in linear time.
In practice, the permutation time is negligible  compared to the matrix multiplication time.
 \section{Contraction orders and trees}\label{sec:contract-orders-and-trees}

In this section, we present our main contribution: a graph-theoretic characterization of the temporal and spatial costs of families of contraction orders.

\subsection{Linear contraction orders}

\begin{figure}[t!]
\centering
\includegraphics[width=0.7\textwidth]{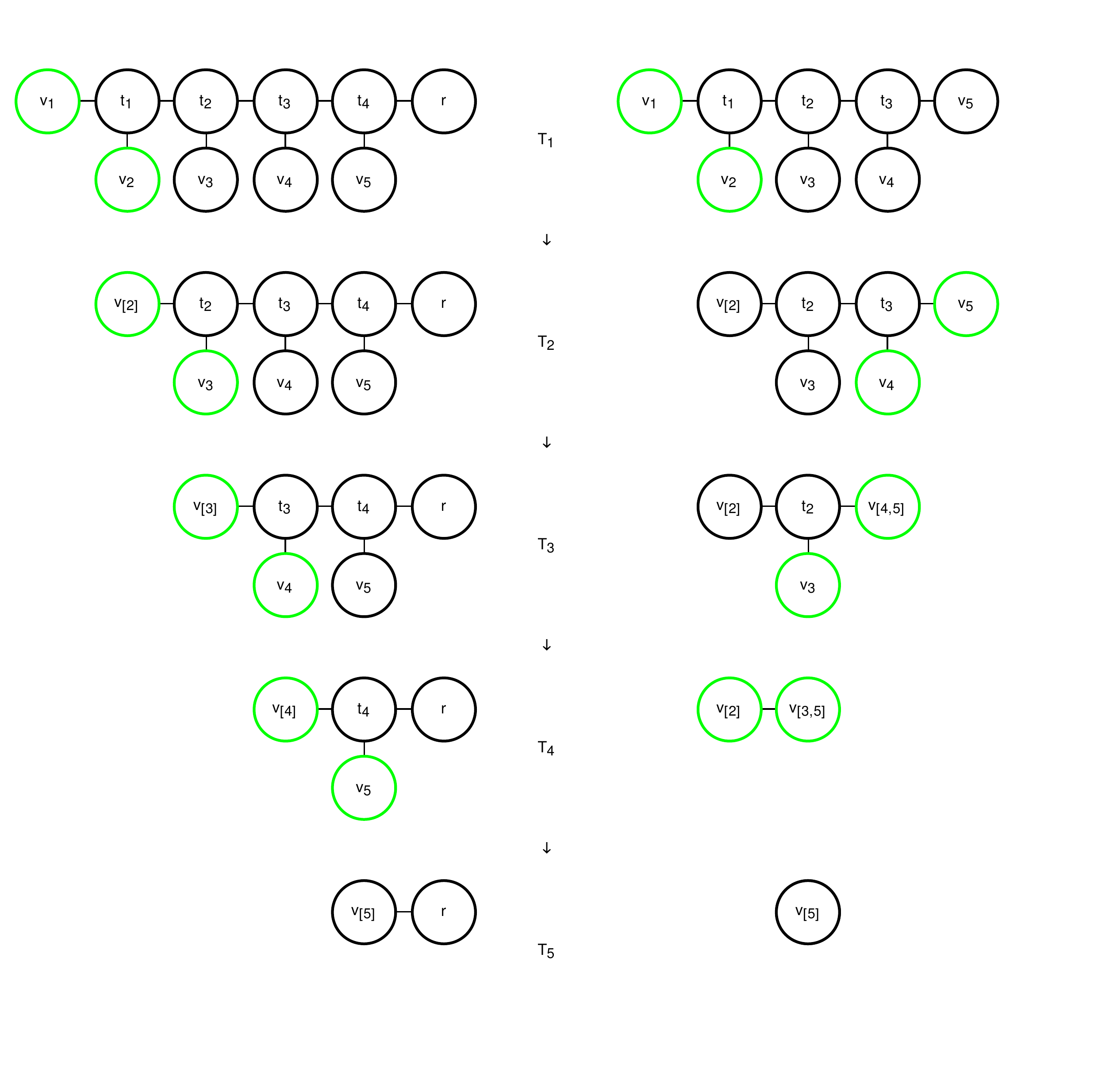}
\caption{Two series of contraction trees (unrooted and rooted on left and right, respectively) for a tensor network with $5$ tensors.
From top to bottom, the contraction trees for the initial, intermediate, and final tensor networks.
The pair of leaves corresponding to the next pair of tensors to be contracted are highlighted in green.
}\label{fig:linear-contraction-trees}
\end{figure}

\begin{figure}
\begin{center}
\includegraphics[width=0.15\textwidth,angle=90]{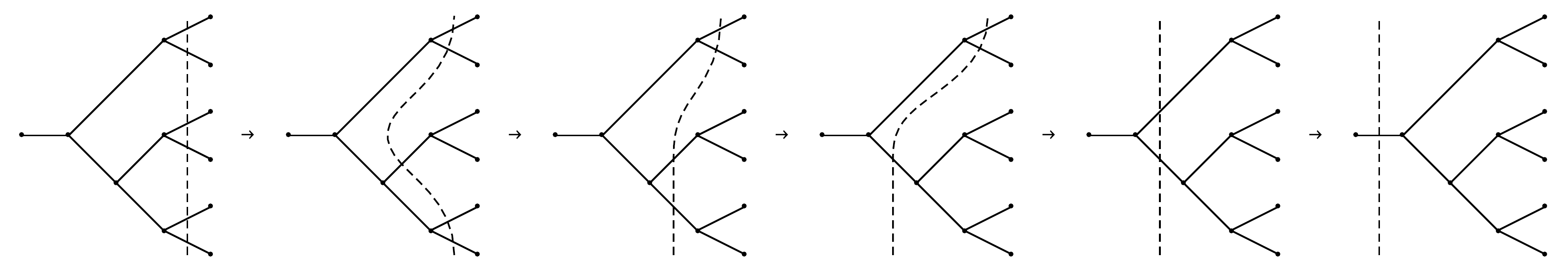}
\caption{The intermediate states of a contraction procedure.
The tree pictured is a rooted contraction tree, with the root at the left.
The dashed line crosses edges of the contraction tree adjacent to tensors held in memory.
}\label{fig:frontiers}
\end{center}
\end{figure}
We start with a special case of contraction orders.
Let a \emph{linear contraction order} be one specified by an ordering of the vertices 
$(v_1, v_2, \ldots, v_n)$.
That is, the first contraction is of vertices $v_1$ and $v_2$ to form a new vertex $v_{1,2}$.
The second contraction is of $v_{1,2}$ and $v_3$ to form $v_{1,2,3}$, and so on.
We represent such a contraction order by what we call a \emph{rooted contraction tree}.
The contraction tree of a linear contraction order is a binary caterpillar tree with $n+1$ leaves, one for each vertex of the original graph and a special leaf called the \emph{root}, as shown in Figure~\ref{fig:linear-contraction-trees}.
The root leaf is at one of the ``ends'' of the tree.
Each vertex $v_i$ for $i \in [2, n]$ is at distance $n + 2 - i$ from the root, and vertex $v_1$ is at distance $n$ therefrom.
We denote such a contraction tree by $(T, b)$, where $T$ is the tree and $b: V(G) \to V(T) \cup \{r\}$ is the bijection between the vertices of $G$ and the leaves of $T$ together with the root $r$.

Recall that for a tensor network $(G, \mathcal M)$, we are using the convention that the weight $w(u, v)$ of an edge $\{u, v\}$ is the logarithm of the bond dimension of wire connecting tensors $\mathbf M_u$ and $\mathbf M_v$.
For each edge $\{u, v\}$ of $G$ there is a unique path in $T$ between $b(u)$ and $b(v)$, which we call a \emph{routing}. 
Assign the weight $w(u, v)$ to every vertex and edge on this path, including the endpoints $b(u)$ and $b(v)$.
We say that the \emph{congestion} of a vertex or edge of $T$, denoted $\congestion(v)$ or $\congestion(e)$, is the sum of the weights of all the routings that include it.
Label the non-root leaves of $T$ by $l_i = b(v_i)$ and the internal vertices by $t_i$ for $i \in [n - 1]$, where $t_{n-1}$ is closest to the root and $t_1$ is farthest.
For concision, identify $t_0$ with $t_1$.

We now show that these congestions capture the costs of the contraction order.
First, note that for each vertex $v \in G$, the congestion $\congestion(e)$ of the edge $e \in E(T)$ adjacent to $b(v)$ gives the size of the tensor $\mathbf M_v$, in the sense that $\congestion(e) = \sum_{u \in V(G)} w(v, u) = \degree_v$, so that 
$2^{\congestion(e)}$ is the product of the bond dimensions of the tensor $M_v$.
Now, consider the first contraction, of vertices $v_1$ and $v_2$, i.e., tensors $\mathbf M_{v_1}$ and $\mathbf M_{v_2}$.
The bond dimension of the wire between them is $2^{w(v_1, v_2)}$.
The product of the bond dimensions of $\mathbf M_{v_1}$ with tensors besides $v_2$ is $2^{\deg_{v_1} - w(u, v)}$, and similarly for $\mathbf M_{v_2}$.
As discussed in Section~\ref{sec:cost-model}, the contraction can be done in time 
$2^{\deg_{v_1} - w(u, v)} \cdot 2^{w(u,v)} \cdot 2^{\deg_{v_2} - w(v_1, v_2)} =
2^{w(v_1, v_2, V(G) \setminus \{v_1, v_2\})}$, where $w(v_1, v_2, V(G) \setminus \{v_1, v_2\})$ is the total weight of edges across the tripartite cut.
This is exactly the congestion of the vertex $t \in V(T)$ adjacent to both $b(v_1)$ and $b(v_2)$.
Suppose that we have done the contraction, yielding a new tensor network containing the contracted vertex $v_{1, 2}$.
The size of this new tensor $\mathbf M_{v_{1,2}}$ is $2^{w(\{v_1, v_2\}, V \setminus \{v_1, v_2\})} = 2^{\congestion(\{t_1, t_2\})}$.
If we continue with the contractions, we notice an exciting pattern.
We can identify each contraction with an internal vertex of $T$.
The congestion of that vertex gives the time to do the contraction, and the congestion of the adjacent edge nearest the root gives the space of the resulting contracted tensor.
The congestion of the leaves, which is equal to the congestions of the adjacent edges and gives the size of the corresponding tensors, can be interpreted as giving the time required to simply read in the tensors of the initial network to be contracted.
Overall, the total time of all the contractions is 
$\sum_{t \in V(T)} 2^{\congestion(t)} \leq 2n \cdot 2^{\vertexcongestion_{T, b}(G)}$,
where  $\vertexcongestion_{T, b}(G) = \max_{t \in V(T)} \congestion(t)$.
Furthermore, each edge $e \in E(T)$ corresponds to a tensor $\mathbf M_e$ that appears at some point in the series of contractions; those adjacent to leaves correspond to the initial tensors and internal edges to tensors resulting from contractions.
The congestion of each edge gives the size of the corresponding tensor, in the sense that the size of $\mathbf M_f$ is $2^{\congestion(f)}$.
At any point point in the contraction order, there are at most $n$ tensors, so the required memory is at most $n 2^{\edgecongestion_{T, b}(G)}$, where
$\edgecongestion_{T, b}(G) = \max_{f \in E(T)} \congestion(f)$.
As shown in Section~\ref{sec:unification}, the minimum vertex congestion over all linear contraction orders is exactly equal the vertex cutwidth of $G$.
It is closely related to the bubblewidth of earlier work~\cite{arad2010quantum}, which is exactly equal to the edge cutwidth.
The minimum edge congestion over all linear contraction orders is exactly equal to the edge cutwidth of $G$.

\subsection{General contraction orders}

We now turn our attention to general (i.e., not necessarily linear) contraction orders.
The first generalization we make is to remove the root.
In other words, for each linear contraction order we form an \emph{unrooted contraction tree} exactly as before except that leaves of $T$ are in unqualified bijection with the vertices of $G$.
This unrooted contraction tree can be interpreted as corresponding to $2^{n-2}$ different contraction orders in the following way.
Define a pair of leaves in a binary tree to be \emph{close} if they are at distance $2$.
In the caterpillar binary trees we have considered thus far, there are two pairs of close leaves, at each ``end'' of the tree.
Before, we used a rooted caterpillar contraction tree to represent the unique contraction order given by contracting the two non-root close leaves until we got to the root.
Now, the unrooted caterpillar contraction tree represents the family of contraction orders that can be specified by contracting \emph{either} pair of a close leaves of the contraction tree until a single vertex remains.
Importantly, it remains true that every one of these contraction orders takes time exactly 
$\sum_{t \in V(T)} 2^{\congestion(t)} = 
\tilde{O}\left(2^{\vertexcongestion_{T, b}(G)}\right)$
and space 
$\tilde{O}\left(2^{\edgecongestion_{T, b}(G)}\right)$.

The second generalization we make is to remove the restriction to caterpillar trees.

\begin{definition}
A \emph{rooted contraction tree} $(T, b)$ of a tensor network $(G, \mathcal M)$ is a rooted binary tree $T$ and a bijection $b: V(G) \to V(T)$ between the vertices (tensors) of $G$ and the (non-root) leaves of $T$.
An \emph{unrooted contraction tree} $(T, b)$ is an unrooted binary tree $T$ and a bijection $b: V(G) \to V(T)$ between the vertices of $G$ and the leaves of $T$.
\end{definition}

\noindent
An unrooted contraction tree represents a set of contraction orders in the following way.
Suppose we have a contraction order $e_1, \ldots, e_{n-1}$;
Each edge can be written as $e_i = \{v_{S}, v_{S'}\}$ for some disjoint $S, S' \subset V(G)$, where $v_{S}$ is the vertex formed by contracting the vertices in $S$.
We start with an empty forest $T_1 = (V(G), \emptyset)$.
For each contraction $e_i = \{v_{S}, v_{S'}\}$, we add a new vertex $v_{S \cup S'}$ to the forest, as well as edges from the new vertex to $v_S$ and $v_{S'}$.
That is, 
$T_i = 
\left(
    V(T_{i-1}) \cup \{v_{S \cup S'}\},
    E(T_{i-1}) \cup \left\{\{v_{S}, v_{S \cup S'}\}, \{v_{S'}, v_{S \cup S'}\}\right\}
\right)$.
For the last contraction, instead of adding a new vertex, we only add an edge between $v_{S}$ and $v_{S'}$.
Doing this yields an unrooted contraction tree for the given contraction order.
We say that an unrooted contraction tree represents the set of contraction orders from which it can be constructed in this way.
If for the last contraction, we added not only a new vertex $v_V$ connected to $v_{S}$ and $v_{S'}$ but a second vertex $r$ connected $v_V$, we would have a rooted contraction tree.

We can easily go the other way as well.
Suppose we have a rooted contraction tree.
Then we can iteratively build a contraction order.
We select an arbitrary close pair of non-root leaves, and add to the contraction order the contraction corresponding to the adjacent internal vertex (of the tree).
We then remove the two leaves and their adjacent edges.
The internal vertex now becomes a leaf, and corresponds to the vertex resulting from contracting the two vertices (of the tensor network) in the new contraction tree.
We repeat until only a single edge of the tree remains, corresponding to the completely contracted tensor network and the root.
This is the same procedure visualized in Figure~\ref{fig:linear-contraction-trees}, except that, when the contraction tree is not restricted to be a caterpillar, there may be many more than two pairs of close leaves to choose from at each step.

Given an unrooted contraction tree, we can turn it into a rooted contraction tree by splitting any edge (i.e., removing an edge, adding a new vertex and adding edges between the new vertex and the vertices adjacent to the removed one), and then adding a root vertex and connecting it to the first newly inserted vertex.

\begin{proof}[Proof of Theorem~\ref{thm:sequential-costs}]
In a contraction tree, either rooted or unrooted, each internal vertex corresponds to a contraction.
In rooted contraction trees, there is a clear directionality; two of the neighbors are ``inputs'' and the third is ``output''.
However, the congestion of the vertex, the exponential of which gives the time to do the matrix multiplication, is independent of this directionality.
Similarly, each edge of a contraction tree corresponds to a tensor that exists at some point in the contraction (specifically, when the edge is adjacent to a leaf).
Again the congestion of this edge is independent of its direction, and the size of the tensor is the exponential of the congestion.
Without loss of generality, we prove the theorem using rooted contraction trees.

Suppose we have a rooted contraction tree $(T, b)$ of a tensor network $(G, \mathcal M)$.
Each internal vertex $i \in V(T)$ corresponds to a matrix multiplication, which takes time $2^{\vertexcongestion*(i)}$.
Each leaf $l \in V(T)$ corresponds to an initial tensor of size $2^{\vertexcongestion*(l)}$, where 
$\vertexcongestion*(l) = \degree_G\left(b^{-1}(l)\right)$.
Overall, the total time then is 
$\sum_{t \in V(T)} 2^{\vertexcongestion*(t)} \leq 2n 2^{\vertexcongestion*(T)}$.

The rooted contraction tree gives a partial ordering of its vertices, which represent contractions (or initial tensors).
Any topologically sorted linear ordering $(t_1, t_2, \ldots, t_{2n-2})$ of the vertices of the contraction tree can be considered uniquely as a contraction order consistent with the contraction tree, and vice versa.
For a given contraction order, consider the intermediate state at some point in the overall contraction procedure.
Let $t_i$ be the last tensor contracted and $t_{i+1}$ the next one to contract.
Each edge $f \in E(T)$ from $\{t_1, \ldots, t_i\}$ to $\{t_{i+1}, \ldots, t_{2n-2}\}$ corresponds to a tensor that needs to be stored at this point.
The size of the tensor is exactly $2^{\edgecongestion*(f)}$.
The size of the next tensor (resulting from the contraction corresponding to $t_{i+1}$) is $2^{\edgecongestion*(f')}$, where $f'$ is the edge from $t_{i+1}$ towards the root of $T$.
Using the convention that the weight of an edge $f \in E(T)$ of $T$ is $w(f)=2^{\edgecongestion*(f)}$, then the directed, weighted modified cutwidth of a vertex $t_{i+1}$ in a linear ordering $(t_1, \ldots, t_{2n-2})$ of the vertices of $T$ is exactly equal to the space needed to store the remaining tensors to be contracted and make room for the tensor resulting from the next contraction.
Once the contraction is done, the memory allocated for the two tensors that were contracted can be freed.
For the coarser space bound, we can just pre-allocate memory for every tensor that will arise during the procedure, in total space 
$\sum_{f \in E(T)} 2^{\edgecongestion*(f)} \leq 2n 2^{\edgecongestion*(T)}$.

Overall, if we choose a contraction tree $T$ with minimum vertex congestion, i.e.,  $\vertexcongestion*(G) = \vertexcongestion*(T) \geq \edgecongestion*(T)$, we get 
time at most $n 2^{\vertexcongestion*(G)}$ and space at most $n 2^{\vertexcongestion*(G)}$.
If instead we choose a contraction tree $T$ with minimum edge congestion, i.e., $\edgecongestion*(G) = \edgecongestion*(T) \geq (2/3) \vertexcongestion*(T)$, we get time at most $n 2^{1.5\edgecongestion*(G) + 1}$ and space at most $n 2^{\edgecongestion*(G)}$.
Tightness follows from the fact that for any contraction order, we can construct a rooted contraction tree whose properties give the stated bounds.
\end{proof}

\begin{proof}[Proof of Theorem~\ref{thm:parallel-costs}]
Suppose we have a rooted contraction tree $(T, b)$ and that $l^* \in V(T)$ is a leaf on a longest path from a leaf to the root using the vertex weight 
$w(t) = 2^{\vertexcongestion*(t)}$.
Call this path from $l^*$ to the root the \emph{critical path} $P*$.
The vertices on $P*$, ordered from the leaf to the root, represent a series of contractions.
This series of contractions can be done in time $\sum_{t \in V(P^*)} 2^{\vertexcongestion*(t)}$, the vertex-weighted length of the $P^*$, which by definition is the longest such path.
We prove the claim for general contraction trees by induction.
The base case is a tensor network of just two tensors, so that there is just a single contraction and the critical path has $3$ vertices.
The inductive step is that if the claim is true for a contraction tree whose critical path has $k$ vertices, it is true for a contraction tree whose critical path has $k+1$ vertices. 
Consider the last vertex $t$ on $P^*$ nearest the root.
It corresponds to a contraction of a tensor from an earlier contraction $P*$ and a tensor from the remaining subtree of $T$, i.e., the part of tree not containing $P^*$.
By definition, the length of the critical path of this subtree is no more than the length of the subpath from $l^*$ to $t$; otherwise $P^*$ would not be the longest path.
Therefore, this subtree can be contracted in less time than the earlier parts of $P^*$.
These can be done in parallel, so the overall time is simply that for $P^*$.
\end{proof}

As shown in Appendix~\ref{sec:branchwidth=edgecon}, a branch decomposition of $L(G)$ with width $k$ can be efficiently converted into a contraction tree of $G$ with edge congestion $k + \left\lfloor\degree(G) / 3\right\rfloor$.
Similarly, a tree decomposition of $L(G)$ with width $k-1$ can be efficiently converted into a contraction tree of $G$ with vertex congestion $k$~\cite{harvey2018treewidth}.
Thus one way of utilizing these results is to use an existing algorithm for finding tree decompositions or branch decompositions as a starting point.
Theorems~\ref{thm:sequential-costs} and~\ref{thm:parallel-costs} can then be used to construct minimum-cost contraction orders in a more precise way than previous results allow.
Developing empirically good implementations of algorithms for finding tree decompositions is a particularly active area research~\cite{pace2017treewidth}.
These are already exploited in much recent work on tensor network contraction~\cite{boixo2017simulation,chen2018classical,villalonga2018flexible}.
The framework presented here can significantly augment the effectiveness of such techniques.
For instances with a lot of structure, as typical ones do, the intuitiveness of contraction trees also empowers manual construction of contraction trees.

There are also techniques for certifying the optimality of tree decompositions and branch decompositions (namely, brambles and tangles) that can be ported to certify the optimality of contraction trees with respect to vertex and edge congestion, respectively.
For planar graphs, the exact edge congestion can be computed (non-constructively) in polynomial time~\cite{seymour1994call}.
In addition to serving as a lower bound for calculations, the structure of such obstructions may help with understanding the complexity of quantum states as represented by tensor networks.

 \section{Extensions and generalizations}\label{sec:ext-and-gen}

\begin{figure}
\begin{center}
\includegraphics[width=0.3\textwidth]{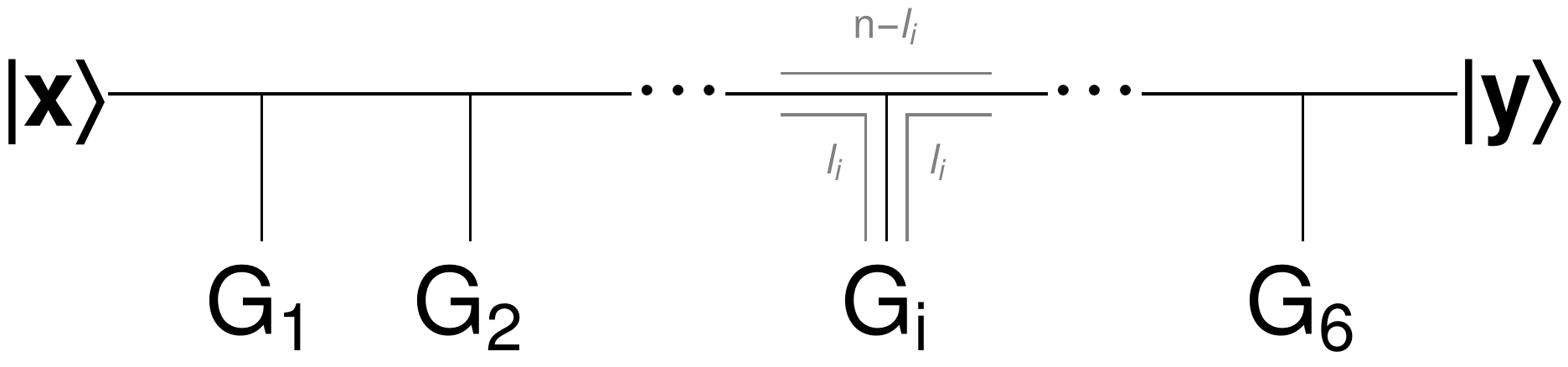}
\end{center}
\caption{Contraction tree representation of the Schr\"{o}dinger algorithm for computing $\braket{x | G_1 G_2 \cdots G_m | y}$.
}\label{fig:schroedinger}
\end{figure}

Heretofore, we have assumed that all tensor networks under consideration had no open legs, i.e., that they contract to a single number (0-rank tensor).
More generally, we can consider tensor networks with open legs that contract to non-trivial tensors.
For such tensors, we treat any open legs as wires to a single ``environment'' tensor, which we then identify with the root of a rooted contraction tree.
For the purposes of minimizing the congestion, the graph will simply have one more vertex.
All previous results regarding the costs of contraction then follow exactly as before without modification.

We can also allow tensor networks $(G, \mathcal M)$ in which $G$ is a hypergraph.
Recall how we defined the congestions of a contraction tree $(T, b)$.
Each vertex $v\in V(G)$ was identified with a leaf of $T$ through the bijection $b$.
Then each edge $\{u, v\} \in E(G)$ contributed its weight to the congestions of the vertices and edges on the routing (unique path) between $b(u)$ and $b(v)$ in $T$.
For a hyperedge $\{v_1, \ldots, v_k\}$, there is a unique subtree of $T$ connecting the adjacent vertices (which is equal to the union of the paths connecting each pair of edges).
Then the hyperedge contributes its weight to the congestions of the vertices and edges on this subtree.
The hyperedge corresponds to a so-called ``copy'' tensor with $k$ legs of the same bond dimension $b$~\cite{biamonte2017tensor}.
The copy tensor is one when all indices have the same value and is zero otherwise.
Such a tensor arises, e.g., in a decomposition of a controlled quantum gate.

Decompositions of tensors highlight the main limitation of the present work.
While our upper bounds are unconditional, our lower bounds hold only within what we call the matrix multiplication model, in which the only operations allowed are matrix multiplications.
This takes advantage only of the topological properties of $G$ and, importantly, not of the properties of the tensor $\mathcal M$. 
However, in many cases of practical interest, the tensors have structure that can be exploited.
For example, a tensor corresponding to a quantum gate can be split into two tensors connected by a wire with bond dimension equal to the Schmidt rank across some bipartition of the qubits on which it acts.
For gates with less-than-full Schmidt rank, this can help with contraction significantly.
Once such a decomposition is made, the sparser graph structure can be exploited by the methods presented here.

Tree-based methods for tensor network contraction are used in state-of-the-art simulations of quantum circuits, where ``simulation'' here means calculation a single matrix element $\braket{\mathbf{x} | C | \mathbf{y}}$ for a pair of basis states $(\ket{\mathbf{x}}, \ket{\mathbf{y}})$ and the circuit $C$.
In addition to providing a precise analysis of such methods, we can also analyze algorithms not usually expressed in such terms.
For example, consider the ``Schr\"{o}dinger'' algorithm: a state vector of size $2^n$ is kept in memory and for each of $m$ gates in sequence.
Suppose each gate acts on at most $l$ qubits.
Let the circuit be represented as a tensor network with $m + 2$ tensors: one for each gate, one for the output $\ket{\mathbf{x}}$, and one for the input $\ket{\mathbf{y}}$.
In the corresponding graph $G$, the vertex $\ket{\mathbf{x}}$ is adjacent to each of the gate vertices that first act on a qubit, with weight equal to the number of qubits that are first acted on by the gate.
Similarly for $\ket{\mathbf{y}}$.
The Schr\"{o}dinger algorithm is then a linear contraction order using the vertex ordering $\left(\ket{\mathbf{x}}, G_1, \ldots, G_m, \ket{\mathbf{y}}\right)$.
Each internal vertex of the contraction tree adjacent to the vertex corresponding to the $l_i$-local gate $G_i$ has congestion $n+l_i$: $n - l_i$ from the qubits not acted on by the gate, then $l_i$ each from the input and output wires. 
The total time for the contraction is thus $\sum_{i=1}^m 2^{n+l_i} \leq m 2^{n+l} = O(m2^n)$, where $l = O(1)$ is the maximum locality of a gate.
Each internal edge has congestion $n$, so the contraction can be done using space $O(2^n)$.

An alternative approach is the ``Feynman'', or path integral, algorithm, which inserts resolutions of the identity after every gate and sums.
Now we consider the tensor network corresponding to $\braket{\mathbf{x} | C | \mathbf{y}}$ slightly differently.
For simplicity, assume all gates are $2$-local.
Instead of having a single vertex $\ket{x}$ for the input, we have $n$ vertices $\ket{x_i}$, one for each qubit.
Similarly, we have $n$ output vertices $\left\{\ket{y_i}\right\}$.
First, we contract the input vertices $\ket{x_i}$ into the adjacent gate vertices.
This leaves $2m$ wires, $2$ from each gate to the next or an output.
Suppose that instead of contracting the entire tensor network, we remove a single wire and replace it with $\ket{b}\bra{b}$ for $b\in\{0, 1\}$.
The value of the original network is the sum of the values of the reduced networks over $b \in \{0, 1\}$.
The Feynman algorithm is then to do this for all wires.
For each value $\mathbf{b} \in {\{0, 1\}}^{2m}$, we have a tensor network of $m$ tensors and no wires, which we can ``contract'' in $O(m)$ time and $O(1)$ space.
But we need to do this for every $\mathbf{b}$ and sum them up, meaning overall it takes $O\left(m4^m\right)$ time.
We need $O(n + m)$ space to keep track of $\mathbf{x}$, $\mathbf{y}$, and $\mathbf{b}$.
We can generalize this approach to arbitrary tensor networks.
First, we remove some set $S \subset E(G)$ of edges, with total weight $W=\sum_{s \in S} w(e)$.
There are $2^W$ values of the corresponding wires, and for each one we contract the reduced tensor network.
Let $\tilde{G} = (V, E \setminus S)$ be the reduced network.
Overall, for a sequential algorithm, this takes time $\tilde{O}\left(2^{W + \vertexcongestion*(\tilde{G})}\right)$ and space $\tilde{O}\left(W + m 2^{\edgecongestion*(\tilde{G})}\right)$.
Moreover, we consider the cuts $S$ as allowing trivial parallelization, by doing the $2^W$ contractions of the reduced network in parallel on the same number of processors.
This idea was used, for example, by Villalonga et al.\ to balance time and memory usage in their simulation of grid-based random quantum circuits.
Aaronson and Chen~\cite{aaronson2016complexity} show that for carefully chosen cuts that form nested partitions, the contributions $W$ to the time and space from the cuts can be significantly reduced.
 \section{Conclusion}\label{sec:conclusion}
We introduced a graph-theoretic framework for precisely quantifying the temporal and spatial costs of tensor network contraction, with the ultimate goal of minimizing these costs.
We conclude with several possible directions for future work:
\begin{itemize}
\item Proving the hardness of exactly or approximately computing the vertex or edge congestion of a graph, including of special cases like planar graphs. 
\item Inventing algorithms (that aren't simply disguised treewidth or branchwidth algorithms) for finding small-congestion contraction trees.
\item 
Exploring the space-time trade-off of vertex and edge congestions. 
They are always within a small multiplicative constant  of each other, but can they be exactly minimized simultaneously?
If not, what does the trade-off look like, particularly for graphs of practical interest like 2D and 3D grids.
\item Parallelizing at larger scale.
In our discussion of parallelized algorithms, we neglected communication costs. While this is probably reasonable at a relatively small number of parallel processes (i.e., that can be on a single multi-processor node of a cluster), at larger scales it may become material and worth trying to minimize.
\item Adapting our methods to approximate tensor network contraction.
\item Finding analogous methods for optimizing tensor-network ansatzes.
For example, it is known that optimizing (bounded-bond dimension) tree tensor networks is easy.
Can this be generalized in a parameterizable way as we did for tensor network contraction?
\end{itemize}
 
\appendix
\section{Branchwidth and edge congestion}\label{sec:branchwidth=edgecon}

\begin{figure}[t!]
\centering
\includegraphics[width=0.9\textwidth]{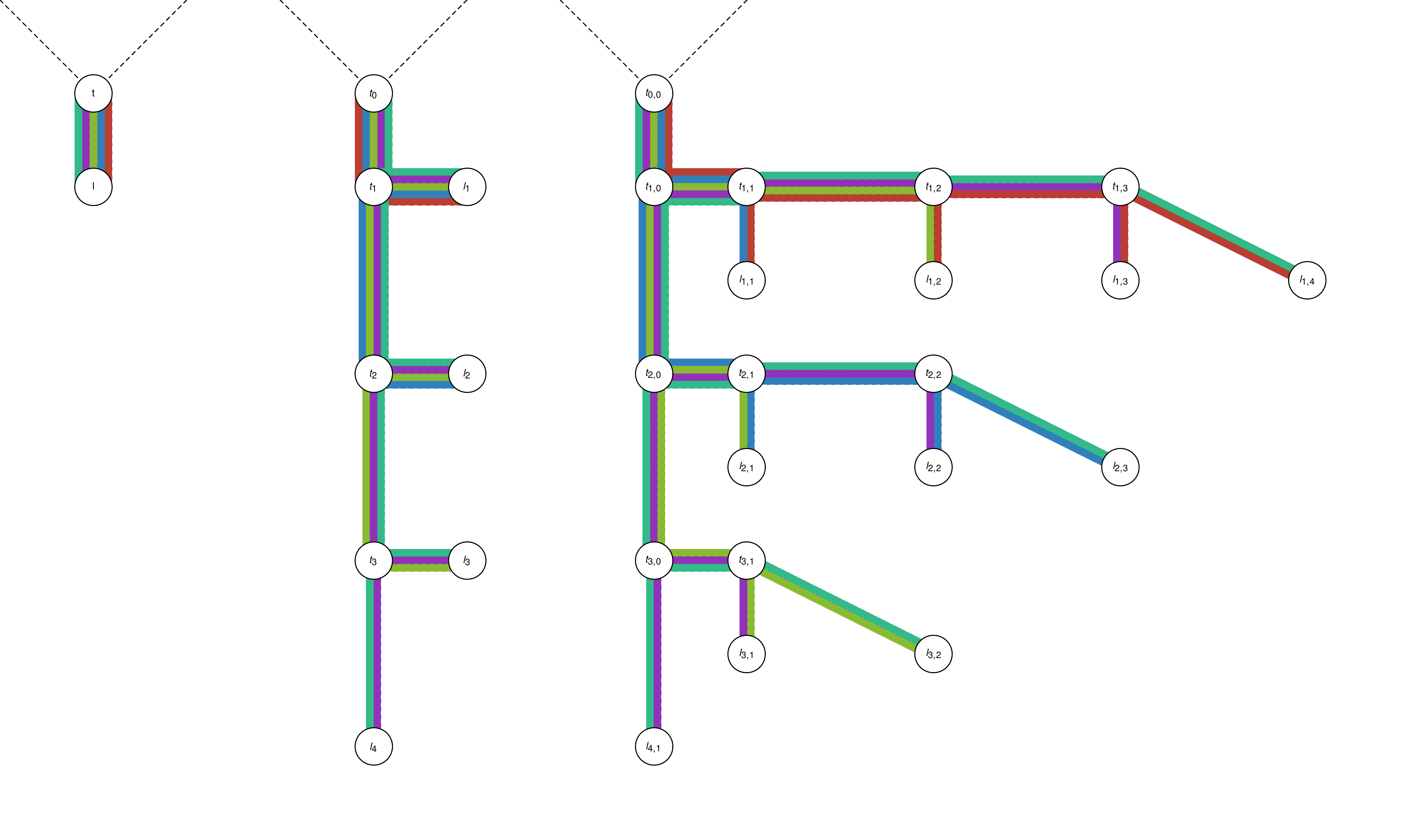}
\caption{From a tree embedding of $G$ to a branch decomposition of $L(G)$. Left: a leaf $l_0$ and neighboring vertex $t_0$ of a tree embedding.
Middle: Replacement of the leaf $l_0$ with a caterpillar subtree.
Right: Replacement of each leaf with a caterpillar subtree.}\label{fig:embedding-to-decomposition}
\end{figure}

\begin{figure}[t!]
\centering
\includegraphics[width=0.9\textwidth]{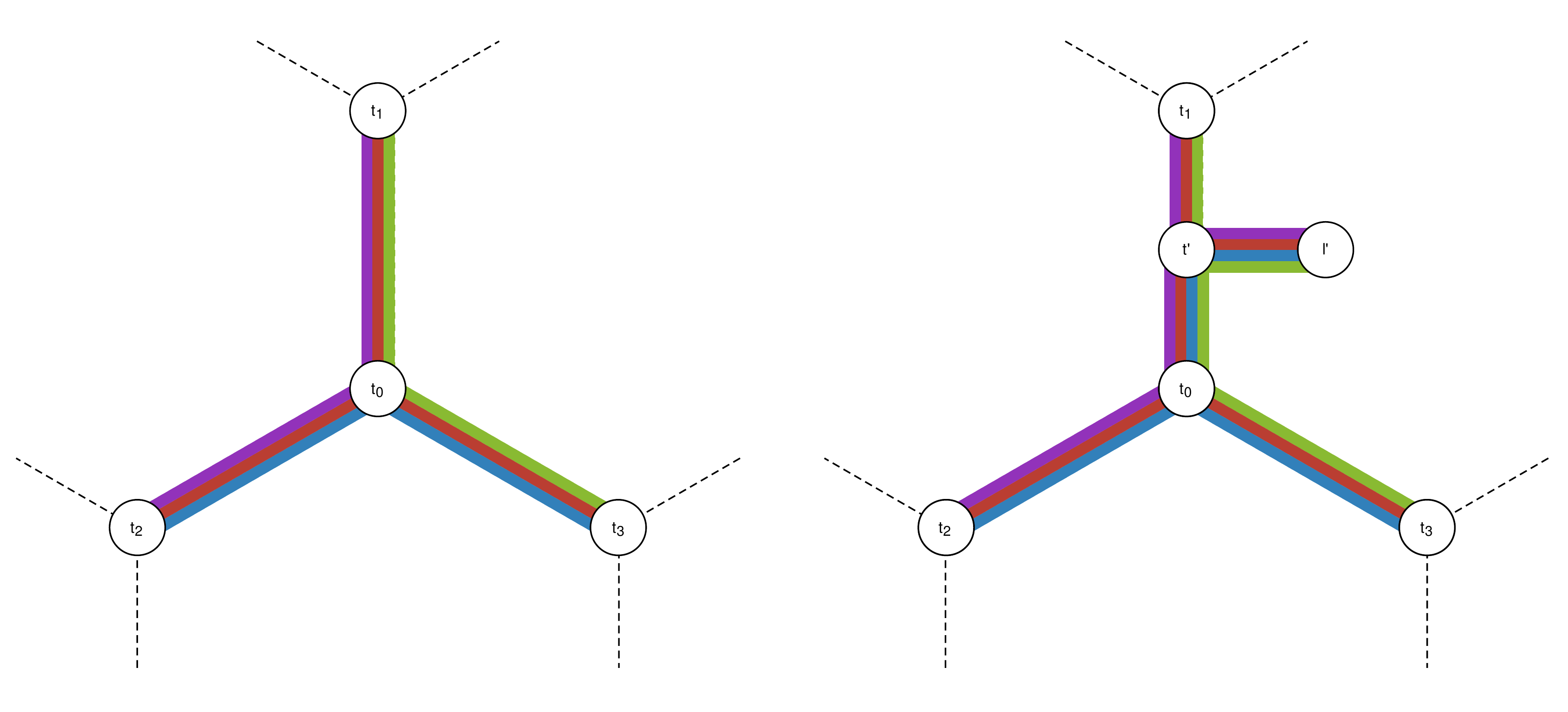}
\caption{
From a branch decomposition of $L(G)$ to a tree embedding of $G$.
Left: Part of a branch decomposition.
Right: Modified part to form a tree embedding.
}\label{fig:decomposition-to-embedding}
\end{figure}

\begin{proof}[Proof of Theorem~\ref{thm:branchwidth-equals-edge-congestion}]
First, we show how to compute a branch decomposition of $L(G)$ with width $k$ given a tree embedding of $G$ with congestion $k$, implying $\branchwidth*(L(G)) \leq \edgecongestion*(G)$.
Suppose we have a tree embedding $(T, b)$ of $G$ with edge congestion $k$.
Let $T'$ be a copy of $T$ and $b': E(L(G)) \to V(T)$ be a mapping from edges of the line graph to vertices of $T$.
In particular, for an edge $e = \{\{u, v\}, \{v, w\}\}$ of $L(G)$ set 
$b': \{\{u, v\}, \{v, w\}\} \mapsto b(v)$, i.e., $b'$ maps adjacent pairs of edges of $G$ to the same leaf mapped to from their common vertex by $b$.
Interpreted as a branch decomposition of $L(G)$, $(T',b')$ has width $k$, except that $b'$ is not injective.
We will now introduce a series of modifications to $(T', b')$ that will turn it into a proper branch decomposition with width $k$.
Note that $b'(e) = b'(f)$ if and only if $e$ and $f$ correspond to the same vertex of $G$.
For each vertex $v$ of $G$, we will replace the corresponding leaf of $T$ with a subtree whose leaves are one-to-one with the edges of $E(L(G))$ corresponding to the vertex $v$.
Consider a particular vertex $v$.
Let $l_0$ be the corresponding leaf of $T$ and $t_0$ its neighbor.
Let $(e_1, e_2, \ldots, e_{\degree_G(v)})$ be an arbitrary ordering of the edges adjacent to $v$ in $G$.
First, we replace the leaf $l_0$ with a subcubic caterpillar graph with internal vertices $(t_1, t_2, \ldots, t_{\degree_G(v) - 2})$ and leaves $(l_1, l_2, \ldots, l_{\degree_G(v) - 1})$ such that $t_i$ is adjacent to $t_{i-1}$ and $l_i$ for $i \in [\degree_G(v)-2]$ and $t_{\degree_G(v) - 2}$ is adjacent to $l_{\degree_G(v) - 1}$.
Then we set $b': \{e_i, e_j\} \mapsto l_{\min(i, j)}$.

At this point 
$\left|b'^{-1}(l_i)\right| = 
\left|\left\{\{e_i, e_j\} \middle| j > i\right\}\right|
= \degree_G(v) - i$.
For each $l_i$ we do the following.
Relabel its neighbor $t_i$ as $t_{i, 0}$.
Replace $l_i$ with another subcubic caterpillar graph with internal vertices 
$\left(t_{i,1}, t_{i, 2}, t_{i, \degree_G(v) -i - 2}\right)$ 
and leaves 
$\left(l_{i,1}, l_{i, 2}, l_{i, \degree_G(v) -i - 1}\right)$ 
such that $t_{i, j}$ is adjacent to $l_{i,j}$ and $t_{i, j-1}$ for $j \in [\degree_G(v) - 2]$ and $t_{i, \degree_G(v) - 2}$ is adjacent to $l_{i, \degree_G(v) - 1}$.
Then set 
\begin{equation}
b': \{e_i, e_j\} \mapsto 
\begin{cases}
l_{i, j - i}, & i < j, \\
l_{j, i - j}, & i > j.
\end{cases}
\end{equation}
At this point, $(T',b')$ is a proper branch decomposition of the line graph $L(G)$.
What is its width?
Let $S'_e$ be the subtree connecting $b'(\{e, f\})$ for all neighbors $f$ of $e$ in $L(G)$.
In the part of $T'$ that we didn't change, this coincides with $S_e$ of the tree embedding $(T, b)$.
The number of subtrees including the edge $\{t_0, t_{1,0}\}$ of $T'$ is the same as that including the edge $\{t_0, l_0\}$ of $T$, which is at most the edge congestion of $(T, b)$.
In particular, it is exactly $\deg_G(v)$.
These are the only subtrees that contain any part of the new parts of the tree $T'$ that we created.
The contstruction is shown for a degree $5$ vertex in Figure~\ref{fig:embedding-to-decomposition}.

Now, we show how to compute a tree embedding with congestion $k$ from a width-$k$ branch decomposition the line graph, implying $\edgecongestion*(G) \leq \branchwidth*(G)$.
Suppose we have a width-$k$ branch decomposition $(T, b)$ of $L(G)$.
Let $T'$ be a tree and $b'$ a function from $V(G)$ to $V(T')$.
Initially we set $(T', b') = (T, b)$ and iteratively modify it into a tree embedding.
For each vertex $v \in V(G)$, the neighboring edges $E_v \subset E(G) = V(L(G))$ form a clique of size $\degree_G(v)$.
Therefore, there must be some vertex $t_0$ of $T$ such that $S_e$ contains $t_0$ for all $e \in E_v$.
Let $t_1, t_2, t_3$ be the three neighbors of $t_0$ and partition $E_v$ into four (potentially empty) parts: $E_0$ contains those edges $e$ such that $S_e$ contains all of $t_0, t_1, t_2, t_3$ and $E_i$ contains those edges $e$ such that $S_e$ does not contain $t_i$, for $i = 1, 2, 3$.
Without loss of generality, assume $w(E_1) \leq w(E_2) \leq w(E_3)$.
Note that 
$\degree(v) =  
\sum_{i=0}^4 w(E_i) \geq  
\sum_{i=1}^4 w(E_i) \geq 3 w(E_1)$.
Now, subdivide the edge between $t_0$ and $t_1$, introducing a new vertex $t'$, and add a new leaf $l'$ adjacent thereto.
For all $e \in E_v$, set $b'(e) = l'$; this leaf will correspond to vertex $v$ in the tree embedding.
Note that the congestion of the edge between $l'$ and $t'$ is $\degree(v)$, and that the congestion of the edge between $t'$ and $t_0$ is $w(E_1) \leq \degree(v) / 3$ more than the congestion of the edge $\{t_0, t_1\}$ that it replaced.
If we do this for every vertex, we get a tree embedding whose congestion is at most $\degree(G) / 3$ more than the width of the branch decomposition we started with.
This is illustrated in Figure~\ref{fig:decomposition-to-embedding}.
\end{proof}

It cannot be the case that for every graph $G$, $\branchwidth*(L(G)) = \edgecongestion*(G)$.
Consider, for example, the star graph $S_k$.
Its edge congestion is at least its maximum degree $k$, but its line graph is the complete graph, whose branchwidth is $\left\lceil \frac{2k}{3}\right\rceil$. 

Consider an alternative, what we'll call the \emph{line hypergraph}, denoted $L^*(G)$, with a vertex for each edge of $E(G)$ and a \emph{hyperedge} for each vertex of $V(G)$ (rather than a clique as in the usual line graph). 
Then it is trivially true that $\branchwidth*(L^*(G)) = \edgecongestion*(G)$.
 
\bibliography{tensors-and-trees}

\end{document}